\theoremstyle{plain}
\newtheorem{theorem}{Theorem}[section]
\newtheorem{fact}[theorem]{Fact}
\newtheorem{corollary}[theorem]{Corollary}
\newtheorem{proposition}[theorem]{Proposition}
\theoremstyle{definition}
\newtheorem{remark}[theorem]{Remark}
\newtheorem{example}[theorem]{Example}
\DeclareMathOperator{\BL}{BL}
\DeclareMathOperator{\diag}{diag}
\DeclareMathOperator{\tr}{tr}
\DeclareMathOperator{\Her}{Herm}
\DeclareMathOperator{\Pos}{P_\succ}
\DeclareMathOperator{\Posemi}{P_\succeq}
\DeclareMathOperator{\GL}{GL}
\DeclareMathOperator{\TPCP}{TPCP}
\DeclareMathOperator{\TPP}{TPP}
\DeclareMathOperator{\St}{S}
\DeclareMathOperator{\LL}{L}
\newcommand*{\ee}{\mathrm{e}}
\newcommand*{\eps}{\varepsilon}
\newcommand*{\Z}{\mathbb{Z}}
\newcommand*{\X}{\mathbb{X}}
\newcommand*{\Y}{\mathbb{Y}}
\newcommand*{\N}{\mathbb{N}}
\newcommand*{\R}{\mathbb{R}}
\newcommand*{\C}{\mathbb{C}}
\newcommand*{\cE}{\mathcal{E}}
\newcommand*{\cF}{\mathcal{F}}
\newcommand*{\cL}{\mathcal{L}}
\newcommand*{\cI}{\mathcal{I}}
\newcommand*{\cP}{\mathcal{P}}
\newcommand*{\cQ}{\mathcal{Q}}
\newcommand*{\id}{\mathds{1}}
\newcommand*{\ci}{\mathrm{i}} 
\newcommand*{\di}{\mathrm{d}} 
\newcommand{\norm}[1]{\left\lVert#1\right\rVert}
\newcommand{\normU}[1]{{\left\vert\kern-0.25ex\left\vert\kern-0.25ex\left\vert#1\right\vert\kern-0.25ex\right\vert\kern-0.25ex\right\vert}}
\newcommand*{\ket}[1]{| #1 \rangle}
\newcommand*{\bra}[1]{\langle #1 |}
\newcommand{\proj}[1]{|#1\rangle\!\langle #1|}
\begin{document}


\title{Quantum Brascamp-Lieb Dualities}

\author{Mario Berta}
\affiliation{Institute for Quantum Information, RWTH Aachen University, Aachen, Germany}
\author{David Sutter}
\affiliation{IBM Quantum, IBM Research Europe, Z\"urich, Switzerland}
\author{Michael Walter}
\affiliation{Faculty of Computer Science, Ruhr University Bochum, Germany}
\affiliation{Korteweg-de Vries Institute for Mathematics, Institute for Theoretical Physics, Institute for Language, Logic \& Computation, and QuSoft, University of Amsterdam, Netherlands}

\begin{abstract}
Brascamp-Lieb inequalities are entropy inequalities which have a dual formulation as generalized Young inequalities. In this work, we introduce a fully quantum version of this duality, relating quantum relative entropy inequalities to matrix exponential inequalities of Young type. We demonstrate this novel duality by means of examples from quantum information theory\,---\,including entropic uncertainty relations, strong data-processing inequalities, super-additivity inequalities, and many more. As an application we find novel uncertainty relations for Gaussian quantum operations that can be interpreted as quantum duals of the well-known family of `geometric' Brascamp-Lieb inequalities.
\end{abstract}

\maketitle

\section{Introduction}\label{sec:intro}

The classical Brascamp-Lieb (BL) problem asks, given a finite sequence of surjective linear maps~$L_k\colon \R^m \to\R^{m_k}$ and~$q_k\in\R_+$ for $k\in[n]$, for the optimal constant~$C\in\R$ such that~\cite{Brascamp76,Lieb1990,Barthe1998,Bennett2008}
\begin{align}\label{eq:introI}
\int_{\R^m} \prod_{k=1}^n f_k\big(L_k x\big) \,\di x \leq \exp(C)\prod_{k=1}^n\|f_k\|_{1/q_k}
\end{align}
holds for all non-negative functions $f_k\colon \R^{m_k}\to\mathbb{R}_+$, $k\in[n]$, where $\|\cdot\|_p$ denotes the $p$-norm.
Many classical integral inequalities fall into this framework, such as the H\"older inequality, Young's inequality, and the Loomis-Whitney inequality. A celebrated theorem by Lieb asserts that the optimal constant in \cref{eq:introI} can be computed by optimizing over centred Gaussians~$f_k$ alone~\cite{Lieb1990}.

Remarkably, \cref{eq:introI} has a dual, entropic formulation in terms of the \emph{differential entropy}~$H(g):=-\int\,g(x)\log g(x)\, \di x$.
Namely, \cref{eq:introI} holds for all $f_1,\dots,f_n$ as above if, and only if, for all probability densities $g$ on $\R^m$ with finite differential entropy, we have~\cite{Carlen09}
\begin{align}\label{eq:introII}
  H(g)\leq\sum_{k=1}^nq_k H(g_k)+C \, .
\end{align}
Here, $g_k$ denotes the marginal probability density on~$\R^{m_k}$ corresponding to $L_k$, i.e., the push-forward of~$g$ along $L_k$ defined by $\int_{\R^m} \phi(L_k x)g(x) \di x = \int_{\R^{m_k}} \phi(y) g_k(y) \di y$ for all bounded, continuous functions $\phi$ on~$\R^{m_k}$.
The duality between \cref{eq:introI} and \cref{eq:introII} readily generalizes to arbitrary measure spaces and measurable maps~\cite{Carlen09}.

Of particular interest is the so-called \emph{geometric} case where each~$L_k$ is a surjective partial isometry and~$\sum_{k=1}^n q_k \, L_k^\dagger L_k = \id_{\R^m}$~\cite{Ball1989,ball1991shadows,ball_1991_ratios,ball_1997,ball2001convex,Barthe1998}.
In this case, \cref{eq:introI} and \cref{eq:introII} hold with~$C=0$.
This setup includes the H\"older and Loomis-Whitney inequalities.
Equivalently, we are given~$n$ subspaces~$V_k\subseteq\R^m$ (the supports of the $L_k$) such that~$\sum_{k=1}^n q_k \, \Pi_k = \id_{\R^m}$, where $\Pi_k$ denotes the orthogonal projection onto~$V_k$.
In this case we can think of the marginal densities~$g_k$ as functions on $V_k$, namely
\begin{align}\label{eq:generalized marginal}
  g_{V_k}(y) = \int_{V_k^\perp} g(y + z) \di z \qquad \forall y \in V_k \, .
\end{align}
In particular, if~$V_k$ is a coordinate subspace of~$\R^m$ then $g_{V_k}$ is nothing but the usual marginal probability density of the corresponding random variables, justifying our terminology. As a concrete example, let $V_1$, $V_2$ be the two coordinate subspaces of~$\R^2$ and $q_1=q_2=1$; then \cref{eq:introII} amounts to the sub-additivity property of the differential entropy, which is dual to the trivial estimate~$\int_{\R^2} f_1(x_1) f_2(x_2) \,\di x \leq \|f_1\|_1 \|f_2\|_1$.
In contrast, already for three equiangular lines in~$\R^2$ (a `Mercedes star' configuration) and $q_1=q_2=q_3=\frac23$, neither inequality is immediate.

Recently, the BL duality has been extended on the entropic side to not only include entropy inequalities as in \cref{eq:introII} but also relative entropy inequalities in terms of the Kullback-Leibler divergence~\cite{Verdu16}. The dual analytic form then again corresponds to generalized Young inequalities as in \cref{eq:introI} but now for weighted $p$-norms. Interestingly, this extended BL duality covers many fundamental entropic statements from information theory and more. This includes, e.g., hypercontractivity inequalities, strong data processing inequalities, and transportation-cost inequalities~\cite{Verdu17}.

Here, we raise the question how aforementioned BL dualities can be extended in the non-commutative setting.
Our main motivation comes from quantum information theory, where quantum entropy inequalities are pivotal and dual formulations often promise new insights.
BL dualities for non-commutative integration have previously been studied by Carlen and Lieb~\cite{Lieb08}.
Amongst other contributions, they gave BL dualities similar to \cref{eq:introI} - \cref{eq:introII} leading to generalized sub-additivity inequalities for quantum entropy.

In this paper, we extend the classical duality results of~\cite{Verdu16,Verdu17} to the quantum setting\,---\,thereby generalizing Carlen and Lieb's BL duality to the quantum relative entropy and general quantum channel evolutions. In particular, we derive in \cref{sec:BL} a fully quantum BL duality for quantum relative entropy and discuss its properties. In \cref{sec_examples} we then discuss a plethora of examples from quantum information theory that are covered by our quantum BL duality. As novel inequalities, we give quantum versions of the \emph{geometric} Brascamp-Lieb inequalities discussed above, whose entropic form can be interpreted as an uncertainty relation for certain Gaussian quantum operations (\cref{sec:Gaussian}).

\emph{Note added:} Since the first version of our manuscript, our geometric quantum Brascamp-Lieb inequalities from \cref{sec:Gaussian} have been extended to the conditional case \cite{Ligthart20} and to more general Gaussian quantum operations \cite{dePalma21}. We briefly mention these extensions in \cref{sec:Gaussian}.

\medskip\textbf{Notation.}\phantomsection\addcontentsline{toc}{subsection}{Notation}
Let $A$ and $B$ be separable Hilbert spaces.
We denote the set of bounded operators on $A$ by~$\LL(A)$, the set of trace-class operators on $A$ by $\mathrm{T}(A)$, the set of Hermitian operators on $A$ by $\Her(A)$, the set of positive operators on $A$ by $\Pos(A)$, and the set of positive semi-definite operators on $A$ by $\Posemi(A)$. A \emph{density operator} or \emph{quantum state} is a positive semi-definite trace-class operator with unit trace; we denote the set of density operators on $A$ by $\St(A)$. The set of trace-preserving and positive maps from~$\mathrm{T}(A)$ to~$\mathrm{T}(B)$ is denoted by $\TPP(A,B)$ and the set of trace-preserving and \emph{completely} positive maps from~$\mathrm{T}(A)$ to~$\mathrm{T}(B)$ is denoted by~$\TPCP(A,B)$. For $\cE\in\TPP(A,B)$ the \emph{adjoint map}~$\cE^\dagger$, which is a unital and positive map from $\LL(B)$ to $\LL(A)$, is defined by $\tr \cE(X)^\dagger Y = \tr X^\dagger \cE^\dagger(Y)$ for all $X \in \mathrm{T}(A)$ and $Y \in \LL(B)$. When it is clear from the context, we sometimes leave out identity operators, i.e., we may write $\rho_A \sigma_{AB} \rho_B$ for~$(\rho_A \otimes \id_B) \sigma_{AB} (\id_A \otimes \rho_B)$.

The \emph{von Neumann entropy} of a density operator~$\rho\in\St(A)$ is defined as\footnote{The case when~$\rho_A$ does not have full support is covered by the convention $0\log0=0$. Unless specified otherwise, we choose to leave the basis of the logarithm function $\log(\cdot)$ unspecified and write $\exp(\cdot)$ for its inverse function.}
\begin{align*}
  H(\rho) := -\tr \rho\log\rho
\end{align*}
and can be infinite (only) if $A$ is infinite-dimensional.
The \emph{quantum relative entropy} of $\omega\in\St(A)$ with respect to $\tau\in\Posemi(A)$ is given by
\begin{align*}
  D(\omega\|\tau):= \tr \omega(\log \omega-\log \tau)\quad\text{if $\omega\ll \tau$ and as $+\infty$ otherwise}\,,
\end{align*}
where $\omega \ll \tau$ denotes that the support of $\omega$ is contained in the support of $\tau$.
The von Neumann entropy can be expressed as a relative entropy, $H(\rho) = -D(\rho\|\id)$, where $\id$ denotes the identity operator.
For $\rho_{AB} \in \St(A \otimes B)$ with $H(A)_\rho<\infty$, the \emph{conditional entropy of $A$ given $B$} is defined as~\cite{Kuznetsova11}
\begin{align*}
  H(A|B)_{\rho}:= H(A)_\rho - D(\rho_{AB}\| \rho_A \otimes \rho_B),
\end{align*}
where the notation~$H(A)_\rho := H(\rho_A)$ refers to the entropy of the reduced density operator~$\rho_A = \tr_B(\rho)$ on~$A$.
For $A$ and $B$ finite-dimensional we can also write $H(A|B)_\rho = H(AB)_\rho - H(B)_\rho$.

Throughout this manuscript the default is that Hilbert spaces are finite-dimensional unless explicitly stated otherwise (such as in~\cref{sec:Gaussian}).

\section{Brascamp-Lieb duality for quantum relative entropies}\label{sec:BL}
In this section, we describe our main result (\cref{thm_BLrelEnt}) and discuss some of its mathematical properties.

\subsection{Main result}\label{sec:main-sub}
The following result establishes a version of the Brascamp-Lieb dualities of~\cite{Carlen09,Verdu16,Verdu17} for quantum relative entropies.

\begin{theorem}[Quantum Brascamp-Lieb duality]\label{thm_BLrelEnt}
Let $n\in \N$, $\vec{q}=(q_1,\cdots,q_n)\in \R_+^n$, $\vec{\cE}=(\cE_1,\cdots,\cE_n)$ with $\cE_k \in \TPP(A,B_k)$ for $k\in[n]$, $\sigma \in \Pos(A)$, $\vec{\sigma}=(\sigma_1,\cdots,\sigma_n)$ with $\sigma_k \in \Pos(B_k)$ for $k\in[n]$, and $C\in\R$.
Then, the following two statements are equivalent:
\begin{align}
\label{eq_resEntropy}
\sum_{k=1}^n q_k D\big(\cE_k(\rho) \| \sigma_k \big) &\leq D(\rho \| \sigma)+C \quad \forall \rho\in\St(A) \,, \\
\label{eq_resBL}
\tr \exp \! \left(\! \log \sigma \!+\! \sum_{k=1}^n \cE_k^\dagger(\log \omega_k)\! \right)
&\leq \exp(C) \prod_{k=1}^n\! \left\lVert \exp \bigl(\log \omega_k + q_k \log \sigma_k \bigr) \right\rVert_{1/q_k} \; \forall \omega_k\in\Pos(B_k) \,,\!\!
\end{align}
where $\norm{L}_p := (\tr \lvert L\rvert^p)^{\frac{1}{p}}$ is the \emph{Schatten $p$-norm} for~$p\in[1,\infty]$ and an anti-norm for $p\in(0,1]$.\footnote{An \emph{anti-norm} is a non-negative function on~$\Pos(A)$ that is homogeneous ($\norm{\alpha \omega} = \alpha \norm{\omega}$ for $\alpha>0$) and super-additive ($\norm{\omega+\omega'} \geq \norm{\omega} + \norm{\omega'}$) for $\omega$, $\omega'\in\Pos(A)$~\cite{hiai_11}. NB: $\lVert\cdot\rVert_1$ is both a norm and an anti-norm.}
Moreover, \cref{eq_resBL} holds for all $\omega_k\in\Pos(B_k)$ if and only if it holds for all $\omega_k\in\St(B_k)$ with full support.
\end{theorem}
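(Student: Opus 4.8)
The plan is to derive both implications from a single tool: the Gibbs variational inequality (equivalently, nonnegativity of quantum relative entropy). First I would record the lemma that for $\sigma\in\Pos(A)$, $H\in\Her(A)$, and $\rho\in\St(A)$,
\[
  \log\tr\exp(\log\sigma + H) \;\ge\; \tr(\rho H) - D(\rho\|\sigma),
\]
with equality exactly when $\rho \propto \exp(\log\sigma + H)$. This is immediate: writing $\rho^\star := \exp(\log\sigma+H)/\tr\exp(\log\sigma+H)$, one has $\log\rho^\star = \log\sigma + H - \log\tr\exp(\log\sigma+H)$, whence $D(\rho\|\rho^\star) = D(\rho\|\sigma) - \tr(\rho H) + \log\tr\exp(\log\sigma+H)$, and the claim is just $D(\rho\|\rho^\star)\ge 0$. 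I would also note the elementary functional-calculus identity, valid for every $\omega_k\in\Pos(B_k)$ and $q_k>0$, that $\log\norm{\exp(\log\omega_k+q_k\log\sigma_k)}_{1/q_k} = q_k\log\tr\exp(\tfrac1{q_k}\log\omega_k+\log\sigma_k)$; this holds uniformly whether $1/q_k$ names a norm or an anti-norm, so that dichotomy plays no role in the argument.

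For \cref{eq_resEntropy} $\Rightarrow$ \cref{eq_resBL}, fix $\omega_k\in\Pos(B_k)$, set $H := \sum_k\cE_k^\dagger(\log\omega_k)\in\Her(A)$, and let $\rho^\star$ be the associated Gibbs state. The equality case of the lemma together with the adjoint relation $\tr(\rho^\star\cE_k^\dagger(\log\omega_k)) = \tr(\cE_k(\rho^\star)\log\omega_k)$ shows that the logarithm of the left side of \cref{eq_resBL} equals $\sum_k\tr(\cE_k(\rho^\star)\log\omega_k) - D(\rho^\star\|\sigma)$. Applying \cref{eq_resEntropy} to the state $\rho^\star$ to bound $-D(\rho^\star\|\sigma)$, and then applying the lemma once per $k$ to the pair $(\sigma_k,\tfrac1{q_k}\log\omega_k)$ and state $\cE_k(\rho^\star)$, each summand is bounded by $q_k\log\tr\exp(\tfrac1{q_k}\log\omega_k+\log\sigma_k) = \log\norm{\exp(\log\omega_k+q_k\log\sigma_k)}_{1/q_k}$. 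Summing and exponentiating gives \cref{eq_resBL}.

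For \cref{eq_resBL} $\Rightarrow$ \cref{eq_resEntropy}, fix $\rho\in\St(A)$ and make the tailored choice $\omega_k := \exp(q_k\log\cE_k(\rho) - q_k\log\sigma_k)$, so that $\tfrac1{q_k}\log\omega_k + \log\sigma_k = \log\cE_k(\rho)$ as an operator sum. Each right-hand factor of \cref{eq_resBL} then collapses to $(\tr\cE_k(\rho))^{q_k}=1$ since $\cE_k$ is trace-preserving, while the lower-bound direction of the lemma, evaluated at $\rho$, shows the left side is at least $\exp(\sum_k\tr(\cE_k(\rho)\log\omega_k) - D(\rho\|\sigma)) = \exp(\sum_k q_k D(\cE_k(\rho)\|\sigma_k) - D(\rho\|\sigma))$; chaining with \cref{eq_resBL} and taking logarithms yields \cref{eq_resEntropy}. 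The hard part is that this $\omega_k$ needs $\log\cE_k(\rho)$, which is undefined when $\cE_k(\rho)$ is rank-deficient (as can happen even for full-support $\rho$). I would resolve this by regularization: replace $\cE_k(\rho)$ by the full-support state $\tau_k^\eps := (1-\eps)\cE_k(\rho)+\eps\,\sigma_k/\tr\sigma_k$, run the argument with $\omega_k^\eps := \exp(q_k\log\tau_k^\eps - q_k\log\sigma_k)\in\Pos(B_k)$ (for which the right-hand factors still equal $1$), and let $\eps\to0$, using continuity of $\rho'\mapsto\tr(\cE_k(\rho')\log\omega_k^\eps)$ and $\tr(\cE_k(\rho)\log\tau_k^\eps)\to -H(\cE_k(\rho))$, the kernel of $\cE_k(\rho)$ contributing nothing in the limit. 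In finite dimensions with full-support $\sigma,\sigma_k$ this is a routine (lower semi)continuity argument; in infinite dimensions (e.g.\ the Gaussian setting of \cref{sec:Gaussian}) the limiting step needs separate justification.

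Finally, the \emph{Moreover} claim follows from scale invariance. Since $\cE_k^\dagger$ is unital, replacing $\omega_k$ by $\lambda_k\omega_k$ with $\lambda_k>0$ adds $(\log\prod_k\lambda_k)\,\id$ inside the exponential on the left of \cref{eq_resBL}, multiplying that side by $\prod_k\lambda_k$, while homogeneity of the (anti-)norm multiplies the right side by the same factor. Thus \cref{eq_resBL} is invariant under positive rescaling of each $\omega_k$, and since every $\omega_k\in\Pos(B_k)$ has full support, the inequality holds for all $\omega_k\in\Pos(B_k)$ if and only if it holds after normalizing $\tr\omega_k=1$, i.e.\ for all full-support $\omega_k\in\St(B_k)$.
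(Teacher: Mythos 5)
Your proposal is correct and follows essentially the same route as the paper: your Gibbs variational lemma is exactly the two-sided Legendre duality of \cref{lem_variational} (proved inline via $D(\rho\|\rho^\star)\geq 0$ rather than cited from~\cite{Petz_variational88}), and your test choices\,---\,the Gibbs state $\rho^\star$ for $H=\sum_k\cE_k^\dagger(\log\omega_k)$ in one direction, and $\omega_k=\exp(q_k\log\cE_k(\rho)-q_k\log\sigma_k)$ in the other\,---\,coincide, up to irrelevant normalization, with the paper's. Your two additions, the $\eps$-regularization handling rank-deficient $\cE_k(\rho)$ and the scale-invariance argument for the \emph{Moreover} clause, soundly fill in steps the paper leaves implicit in the finite-dimensional setting.
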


\noindent
We refer to \cref{eq_resEntropy} as a quantum Brascamp-Lieb inequality in \emph{entropic form}, and to \cref{eq_resBL} as a quantum Brascamp-Lieb inequality in \emph{analytic form}.
The latter can be understood as a quantum version of a Young-type inequality.
The two formulations in \cref{eq_resEntropy} and \cref{eq_resBL} encompass a large class of concrete inequalities, as we will see in \cref{sec_examples} below; we are also often interested in identifying the smallest constant~$C\in\R$ such that either inequality holds.
To this end, both directions of \cref{thm_BLrelEnt} are of interest:
\begin{enumerate}
\item To prove quantum entropy inequalities, \cref{thm_BLrelEnt} allows us to alternatively work with matrix exponential inequalities in the analytic form.
That this approach can give crucial insights was already discovered in the original proof of strong sub-additivity of the von Neumann entropy~\cite{LieRus73}, which relied on Lieb's triple matrix inequality for the exponential function (see also~\cite{Frank2013,Sutter2017} for more recent works). We discuss similar examples in \cref{subsec:entropic}.
\item In the commutative setting, we know that for deriving Young-type inequalities it can be beneficial to work in the entropic form~\cite{Carlen04,Carlen09}.
As the quantum relative entropy has natural properties mirroring its classical counterpart, this translates to the non-commutative setting.
We discuss corresponding examples in \cref{subsec:sub-additivity} and \cref{sec:Gaussian}.
\end{enumerate}

\noindent
The proof of \cref{thm_BLrelEnt} relies on the following formula for the Legendre transform of the quantum relative entropy and its dual.

\begin{fact}[Variational formula for quantum relative entropy~\cite{Petz_variational88}]\label{lem_variational}
Let $\sigma \in \Pos(A)$. Then:
\begin{itemize}
\item For all $\rho \in \St(A)$ we have
\begin{align}\label{eq_varRelEnt}
D(\rho \| \sigma) = \sup \limits_{\omega\in\Posemi(A)} \left \lbrace \tr \rho \log\omega - \log \tr \exp(\log\omega + \log \sigma) \right \rbrace \,.
\end{align}
Furthermore, the supremum is attained for $\omega = \exp(\log \rho - \log \sigma)/\tr  \exp(\log \rho - \log \sigma)$.
\item For all $H \in \Her(A)$, we have
\begin{align}\label{eq_legendre}
\log \tr \exp(H +\log \sigma) = \sup_{\omega \in \St(A)} \left \lbrace \tr H \omega - D(\omega \| \sigma) \right \rbrace \, .
\end{align}
Furthermore, the supremum is attained for $\omega= \exp(H +\log \sigma)/\tr \exp(H + \log \sigma)$.
\end{itemize}
\end{fact}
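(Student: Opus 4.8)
The plan is to obtain both identities from a single source, namely the nonnegativity of the quantum relative entropy (Klein's inequality): $D(\omega\|\tau)\ge 0$ for all $\omega,\tau\in\St(A)$, with equality if and only if $\omega=\tau$. Since $\sigma\in\Pos(A)$ is positive definite and $A$ is finite-dimensional, $\log\sigma$ is a bounded Hermitian operator, so every matrix exponential appearing below is full-rank and every trace is finite. I would treat the Legendre identity \cref{eq_legendre} first, as it is the cleaner of the two, and then derive \cref{eq_varRelEnt} from it by Fenchel duality.

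To prove \cref{eq_legendre}, fix $H\in\Her(A)$ and set $Z:=\tr\exp(H+\log\sigma)$ together with $\tau:=\exp(H+\log\sigma)/Z$, which is a full-rank state. Taking logarithms gives $\log\tau=H+\log\sigma-(\log Z)\,\id$, and inserting this into $D(\omega\|\tau)=\tr\omega\log\omega-\tr\omega\log\tau$ (using $\tr\omega=1$) yields the key identity
\begin{align*}
\tr H\omega-D(\omega\|\sigma)=\log Z-D(\omega\|\tau)\qquad\forall\,\omega\in\St(A)\,.
\end{align*}
Klein's inequality bounds the right-hand side by $\log Z=\log\tr\exp(H+\log\sigma)$, with equality precisely at $\omega=\tau$. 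This simultaneously establishes the formula and identifies the maximizer; that the supremum is genuinely attained follows from compactness of $\St(A)$ and lower semicontinuity of $\omega\mapsto D(\omega\|\sigma)$.

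Next I would obtain \cref{eq_varRelEnt} by duality. A preliminary observation is that the bracketed functional in \cref{eq_varRelEnt} is invariant under the rescaling $\omega\mapsto t\omega$ for $t>0$ (again because $\tr\rho=1$), so the supremum over $\Posemi(A)$ may be taken over full-rank states without loss of generality. For any such $\omega\in\Pos(A)$, setting $H=\log\omega$ in \cref{eq_legendre} and choosing the feasible point $\rho$ gives
\begin{align*}
\log\tr\exp(\log\omega+\log\sigma)\ge\tr\rho\log\omega-D(\rho\|\sigma)\,,
\end{align*}
which rearranges to show that $D(\rho\|\sigma)$ dominates the bracket for every admissible $\omega$. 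For tightness I would check the claimed optimizer $\omega^\star=\exp(\log\rho-\log\sigma)/\tr\exp(\log\rho-\log\sigma)$: a direct computation gives $\log\omega^\star+\log\sigma=\log\rho-(\log N)\,\id$ with $N=\tr\exp(\log\rho-\log\sigma)$, so $\exp(\log\omega^\star+\log\sigma)=\rho/N$ and hence the normalized exponential equals $\rho$. By the equality condition in \cref{eq_legendre} the bound is saturated, and evaluating the bracket at $\omega^\star$ returns exactly $D(\rho\|\sigma)$.

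The main obstacle is the boundary behaviour when supports are not full. This never arises for \cref{eq_legendre}, since $\exp(H+\log\sigma)$ is always full-rank. For \cref{eq_varRelEnt}, however, the optimizer $\omega^\star$ is rank-deficient whenever $\rho$ is, and $\log\omega$ is then $-\infty$ on $\ker\omega$, so the argument above strictly applies only to full-rank $\rho$. I would resolve this either by restricting all operators to the support of $\rho$ (on which $D(\rho\|\sigma)$ is finite, because $\sigma$ has full support and thus $\rho\ll\sigma$ automatically), or by approximating $\rho$ by full-rank states $\rho_\eps$, applying the full-rank result, and passing to the limit $\eps\to 0$ using continuity of the relative entropy in finite dimensions. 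This is the only place where care beyond the algebraic identity and Klein's inequality is required.
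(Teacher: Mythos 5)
Your proof is correct, but it cannot be compared to a proof in the paper for a simple reason: the paper offers none. The statement is labelled a \emph{Fact} and imported wholesale from Petz~\cite{Petz_variational88}, where it is established in the general $W^*$-algebra setting\,---\,a level of generality the paper actually exploits later, when it remarks that \cref{thm_BLrelEnt} and \cref{cor:lieb-carlen} extend to separable Hilbert spaces precisely because the variational formulas hold for $W^*$-algebras. Your argument is thus a genuinely independent, self-contained route in finite dimensions: you obtain \cref{eq_legendre} as a Gibbs variational principle, via the exact identity $\tr H\omega - D(\omega\|\sigma) = \log Z - D(\omega\|\tau)$ with $\tau = \exp(H+\log\sigma)/Z$ plus Klein's inequality (which also hands you the maximizer and its uniqueness for free), and you then derive \cref{eq_varRelEnt} by Fenchel duality from \cref{eq_legendre}, verifying the claimed optimizer through $\log\omega^\star + \log\sigma = \log\rho - (\log N)\id$\,---\,correctly using only additivity of exponents, with no illicit commutation. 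What your approach buys is transparency and explicit equality conditions; what the citation buys is infinite-dimensional generality that a finite-dimensional Klein-inequality argument does not reach. Two spots in your sketch deserve one more line each. First, restricting the supremum in \cref{eq_varRelEnt} from $\Posemi(A)$ to full-rank $\omega$ uses scaling to normalize, but dismissing rank-deficient $\omega$ also requires either the convention $\tr\rho\log\omega=-\infty$ when $\rho\not\ll\omega$ together with a support-restriction convention for the remaining boundary cases, or the (standard but not free) convergence $\tr\exp\bigl(\log(\omega+\delta\id)+\log\sigma\bigr)\to\tr\exp\bigl(\log\omega+\Pi(\log\sigma)\Pi\bigr)$ computed on $\Pi=\supp\omega$. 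Second, in your approximation route for rank-deficient $\rho$ you cannot naively exchange the supremum with the limit $\eps\to0$, because the optimizers $\omega^\star_\eps$ have $\lVert\log\omega^\star_\eps\rVert\to\infty$; with the specific choice $\rho_\eps=(1-\eps)\rho+\eps\,\id/d$ the error term $\tr(\rho-\rho_\eps)\log\omega^\star_\eps$ is of order $\eps\log\eps\to0$, which closes the gap, and your alternative of working on $P=\supp\rho$ (replacing $\sigma$ by the exponential of $P(\log\sigma)P$ within that subspace, which leaves $D(\rho\|\sigma)$ unchanged since $\tr\rho\log\sigma=\tr\rho\,P(\log\sigma)P$) is equally viable. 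With those two clarifications your proof is complete and would serve as a legitimate finite-dimensional substitute for the citation.
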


\noindent
These variational formulas are powerful on their own for proving quantum entropy inequalities, as, e.g., the first term in \cref{eq_varRelEnt} only depends on $\rho$ (but not on $\sigma$) and the second term only on $\sigma$ (but not on $\rho$). We refer to~\cite{Sutter2017} for a more detailed discussion.

We mention that Carlen-Lieb use the variational characterization of the von Neumann entropy to derive Brascamp-Lieb dualities and \cite[bottom of page 564]{Lieb08} commented that their proof strategy extends to the relative entropy via Petz's variational expression for the relative entropy (\cref{lem_variational}), which is what is done here.


\begin{proof}[Proof of \cref{thm_BLrelEnt}]
We first show that \cref{eq_resEntropy} implies \cref{eq_resBL}.
Let $H_k:=\log \omega_k$ and define $H\in\Her(A)$ and $\rho\in\St(A)$ by
\begin{align}\label{eq_Hmat}
H:=\sum_{k=1}^n \cE^\dagger_k(H_k)  \qquad \text{and} \qquad \rho := \frac{\exp(H +\log \sigma)}{\tr \exp(H + \log \sigma)} \  \, ,
\end{align}
respectively.
Then,
\begin{align*}
\log \tr \exp \left(  \log \sigma +  \sum_{k=1}^n \cE^\dagger_k(H_k)  \right)
&= \log \tr \exp ( H + \log \sigma ) \\
&= \tr H \rho - D(\rho \| \sigma) \\
&= \sum_{k=1}^n \tr \cE^\dagger_k(H_k) \rho - D(\rho \| \sigma) \\
&\leq C + \sum_{k=1}^n  q_k\Big( \tr \frac{H_k}{q_k} \cE_k(\rho) - D\big(\cE_k(\rho) \| \sigma_k \big) \Big)  \\
&\leq C + \sum_{k=1}^n q_k \log \tr \exp\left(\frac{H_k}{q_k} + \log \sigma_k \right) \, ,
\end{align*}
where we used \cref{eq_legendre} in both the second and the last step and \cref{eq_resEntropy} in the penultimate step.
By substituting $H_k = \log \omega_k$ and taking the exponential on both sides we obtain \cref{eq_resBL}.

We now show that, conversely, \cref{eq_resBL} implies \cref{eq_resEntropy}. Let $\omega=\exp(H)$, with $H$ defined as in \cref{eq_Hmat} in terms of $H_k = \log(\omega_k)$ for $\omega_k \in \Pos_{\sigma_k}(B_k)$ that we will choose later. Then, using \cref{eq_varRelEnt},
\begin{align*}
D(\rho \| \sigma)
&\geq \tr \rho\log \omega - \log \tr \exp(\log\omega + \log \sigma) \\
&= \sum_{k=1}^n  \tr \rho\, \cE^\dagger_k(H_k) - \log \tr \exp\left( \sum_{k=1}^n \cE^\dagger_k(H_k) + \log \sigma \right) \\
&= \sum_{k=1}^n  \tr \cE_k(\rho) \log \omega_k - \log \tr \exp\left( \log \sigma + \sum_{k=1}^n \cE^\dagger_k(\log \omega_k) \right) \\
&\geq \sum_{k=1}^n q_k \left( \tr \cE_k(\rho) \frac{\log \omega_k}{q_k} - \log \tr \exp\Big( \frac{\log \omega_k}{q_k} + \log \sigma_k \Big) \right)-C\\
&= \sum_{k=1}^n q_k D\big(\cE_k(\rho) \| \sigma_k \big)-C\,,
\end{align*}
where the last inequality uses \cref{eq_resBL} and the final step follows from \cref{eq_varRelEnt} provided we choose $\omega_k^{1/q_k}$ as the maximizer for the variational expression of $D\big(\cE_k(\rho) \| \sigma_k \big)$.
\end{proof}

\begin{remark}
As the variational characterizations from \cref{lem_variational} hold in the general $W^*$-algebra setting~\cite{Petz_variational88}, the BL duality in \cref{thm_BLrelEnt} extends to separable Hilbert spaces.
\end{remark}

\begin{remark}\label{rmk:support}
The BL duality in \cref{thm_BLrelEnt} can be extended to $\sigma \in \Posemi(A)$ and $\vec{\sigma}=(\sigma_1,\cdots,\sigma_n)$ with $\sigma_k \in \Posemi(B_k)$ for $k\in[n]$ when
\begin{enumerate}
\item $\cE_k(\rho) \ll \sigma_k$ for all $\rho \in \St(A)$ with $\rho\ll\sigma$
\item $\cE^\dagger(\log\omega_k)\ll\sigma$  for all $\omega_k\in\Posemi(B)$ with $\omega_k\ll\sigma_k$. 
\end{enumerate}
Then, the BL duality still holds but for the alternative conditions
\begin{align*}
\rho\in\St(A)\;\text{with}\;\rho\ll\sigma\;\text{in \cref{eq_resEntropy}}\quad \text{ and } \quad \omega_k\in\Pos(B_k)\;\text{with}\;\omega_k\ll\sigma_k\;\text{in \cref{eq_resBL}.}
\end{align*}
To see this, note that the variational formula in \cref{eq_varRelEnt} still holds for $\sigma\in\Posemi(A)$ as long as $\rho\ll\sigma$ with the supremum taken over $\omega\in\Posemi(A)$ with $\omega\ll\sigma$. Similarly, \cref{eq_legendre} still holds for $H \in \Her(A)$ for $H\ll\sigma$ with the supremum taken over $\omega\in\St(A)$ with $\omega\ll\sigma$. The proof of \cref{thm_BLrelEnt} then also goes through in the more general form.
\end{remark}

In many important applications, we are interested in using \cref{thm_BLrelEnt} either in the situation that $\sigma_k = \mathcal E_k(\sigma)$ for all $k\in[n]$, or in a setting where $\sigma=\id_A$ and $\sigma_k=\id_{B_k}$ for all~$k\in[n]$. In the latter case, \cref{thm_BLrelEnt} specializes to the following equivalence between von Neumann entropy inequalities and Young-type inequalities:

\begin{corollary}\label{cor_BL}
Let $n\in \N$, $\vec{q}=(q_1,\cdots,q_n)\in \R_+^n$, $\vec{\cE}=(\cE_1,\cdots,\cE_n)$ with $\cE_k \in \TPP(A,B_k)$ for $k\in[n]$, and $C\in\R$.
Then, the following two statements are equivalent:
\begin{align}
\label{eq_Entropy}
  H(\rho) &\leq \sum_{k=1}^n q_k H\big(\cE_k(\rho) \big)+C \quad \forall \rho \in \St(A) \, ,\\
\label{eq_BL}
\tr \exp\Big( \sum_{k=1}^n \cE^\dagger_k(\log \omega_k) \Big) &\leq\exp(C)\prod_{k=1}^n \norm{\omega_k}_{1/q_k} \quad \forall \omega_k \in \St(B_k) \,.
\end{align}
\end{corollary}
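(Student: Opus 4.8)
The plan is to obtain \cref{cor_BL} as the special case of \cref{thm_BLrelEnt} in which $\sigma=\id_A$ and $\sigma_k=\id_{B_k}$ for every $k\in[n]$; this is an admissible choice because in the (default) finite-dimensional setting $\id_A\in\Pos(A)$ and $\id_{B_k}\in\Pos(B_k)$. First I would check that under this choice the entropic form \cref{eq_resEntropy} reduces to \cref{eq_Entropy}. This is immediate from the identity $H(\rho)=-D(\rho\|\id)$ noted in the notation section: one has $D(\rho\|\id_A)=\tr\rho\log\rho=-H(\rho)$ and $D(\cE_k(\rho)\|\id_{B_k})=-H(\cE_k(\rho))$, so that $\sum_{k}q_k D(\cE_k(\rho)\|\id_{B_k})\le D(\rho\|\id_A)+C$ rearranges to exactly $H(\rho)\le\sum_k q_k H(\cE_k(\rho))+C$.

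Second, I would check that the analytic form \cref{eq_resBL} reduces to \cref{eq_BL}. Since $\log\sigma=0$ and $\log\sigma_k=0$, the left-hand side of \cref{eq_resBL} becomes $\tr\exp(\sum_k\cE_k^\dagger(\log\omega_k))$ and the right-hand side becomes $\exp(C)\prod_k\norm{\exp(\log\omega_k)}_{1/q_k}=\exp(C)\prod_k\norm{\omega_k}_{1/q_k}$, which matches \cref{eq_BL} verbatim. The only discrepancy is the range of the $\omega_k$: \cref{thm_BLrelEnt} quantifies over $\omega_k\in\Pos(B_k)$, whereas \cref{cor_BL} quantifies over $\omega_k\in\St(B_k)$, which additionally includes rank-deficient states.

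Third, I would reconcile the two ranges. On the one hand, validity of \cref{eq_BL} over all of $\St(B_k)$ trivially implies its validity over the full-support states, which by the final clause of \cref{thm_BLrelEnt} (and the homogeneity of both sides under $\omega_k\mapsto\lambda_k\omega_k$, using that $\cE_k^\dagger$ is unital) is equivalent to \cref{eq_resBL} over all of $\Pos(B_k)$. On the other hand, to pass from $\Pos(B_k)$ to arbitrary, possibly rank-deficient, states I would use a limiting argument: given $\omega_k\in\St(B_k)$, apply the already-established inequality to $\omega_k+\eps\,\id_{B_k}\in\Pos(B_k)$ and let $\eps\downarrow0$. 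As $\eps$ decreases, these operators decrease in the positive-semidefinite order, so $\log(\omega_k+\eps\,\id_{B_k})$ decreases (operator monotonicity of $\log$), hence $\cE_k^\dagger(\log(\omega_k+\eps\,\id_{B_k}))$ decreases (positivity of $\cE_k^\dagger$), and therefore $\tr\exp(\sum_k\cE_k^\dagger(\log(\omega_k+\eps\,\id_{B_k})))$ decreases (monotonicity of $X\mapsto\tr e^X$ under the positive-semidefinite order); meanwhile the right-hand side converges to $\exp(C)\prod_k\norm{\omega_k}_{1/q_k}$ by continuity of the Schatten (anti-)norms. Taking $\eps\downarrow0$ in the inequality valid for each $\eps>0$ then yields \cref{eq_BL} at $\omega_k$, with the left-hand side at a rank-deficient argument understood as this monotone limit.

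The substitutions in the first two steps are purely mechanical, so I expect the main obstacle to be the range-matching in the third step, namely the boundary behaviour of $\log\omega_k$ as $\omega_k$ loses support. The clean route around it is to exploit that $X\mapsto\tr e^X$ is monotone in the positive-semidefinite order and that $\cE_k^\dagger$ is positive: together these make the left-hand side vary monotonically in $\eps$, so its limit exists without any delicate control of the non-commuting matrix exponential, while the homogeneity already built into \cref{thm_BLrelEnt} takes care of normalization between $\Pos(B_k)$ and the full-support states.
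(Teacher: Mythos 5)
Your proposal is correct and follows essentially the same route as the paper, which obtains \cref{cor_BL} simply by specializing \cref{thm_BLrelEnt} to $\sigma=\id_A$ and $\sigma_k=\id_{B_k}$ and using $H(\rho)=-D(\rho\|\id)$. Your third step\,---\,reconciling the quantifier ranges $\Pos(B_k)$ versus $\St(B_k)$ via homogeneity, the theorem's final clause, and a monotone $\eps$-regularization for rank-deficient states\,---\,is a sound piece of bookkeeping that the paper leaves implicit, not a different argument.
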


\noindent
Carlen and Lieb previously proved a variant of \cref{cor_BL} in the $W^*$-algebra setting assuming that the maps $\cE_k^\dagger$ are $W^*$-homomorphisms and that $q_k\in[0,1]$~\cite[Theorem 2.2]{Lieb08}.
One interesting special case is when the~$\cE_k$ are partial trace maps.
The entropic form \cref{eq_Entropy} then corresponds to generalized sub-additivity inequalities for the von Neumann entropy (cf.~\cref{subsec:sub-additivity}).

\subsection{Weighted anti-norms}\label{subsec:antinorm}
In the commutative setting, the right-hand side of \cref{eq_resBL} can conveniently be understood as a product of $\sigma_k$-weighted norms or anti-norms of the operators $\omega_k$~\cite{Verdu16,Verdu17}.
It is natural to ask whether such an interpretation also holds quantumly.
To this end, given~$p \in (0,1]$ and~$\sigma \in \Pos(A)$, define
\begin{align*}
\normU{\omega}_{\sigma,p}
:= \big( \tr \exp( \log \omega^p + \log \sigma) \big)^{\frac{1}{p}}=\norm{\exp \Big(\log \omega + \frac{1}{p} \log \sigma \Big)}_p \, ,
\end{align*}
for all $\omega\in\Pos(A)$.
The following proposition, which follows readily from~\cite{huang19}, shows that~$\normU{\cdot}_{\sigma,p}$ is an anti-norm provided that $p\leq1$.
For $p>1$, it is easy to find $\sigma\in\Pos(A)$ such that the functional~$\normU{\cdot}_{\sigma,p}$ is neither a norm nor an anti-norm.

\begin{proposition}\label{prp:antinorm}
For $p \in (0,1]$ and $\sigma \in \Pos(A)$, $\normU{\cdot}_{\sigma,p}$ is homogeneous and concave, hence an anti-norm.
\end{proposition}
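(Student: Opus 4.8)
The plan is to check the three defining properties of an anti-norm separately. Non-negativity is immediate, since $\exp(\log\omega^p+\log\sigma)\in\Posemi(A)$ and hence its trace is non-negative. I would then establish homogeneity by a direct computation, invoke the concavity result of~\cite{huang19} for the substantive step, and finally deduce super-additivity from homogeneity together with concavity via a midpoint argument. Note that exactly the two properties named in the statement (homogeneity and concavity) are the nontrivial ones, and that super-additivity follows formally from them.

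For homogeneity, fix $\alpha>0$ and $\omega\in\Pos(A)$. Since $(\alpha\omega)^p=\alpha^p\omega^p$, we have $\log(\alpha\omega)^p=p(\log\alpha)\,\id+\log\omega^p$, so the argument of the exponential only acquires an additive scalar multiple of the identity; this factors out of the exponential and yields $\tr\exp(\log(\alpha\omega)^p+\log\sigma)=\alpha^p\,\tr\exp(\log\omega^p+\log\sigma)$. Taking the $\tfrac1p$-th power gives $\normU{\alpha\omega}_{\sigma,p}=\alpha\,\normU{\omega}_{\sigma,p}$. The same bookkeeping also justifies the equivalence of the two displayed expressions for $\normU{\cdot}_{\sigma,p}$: the operator $\exp(\log\omega+\tfrac1p\log\sigma)$ is positive definite, so its $p$-th power equals $\exp(\log\omega^p+\log\sigma)$, whence $\norm{\exp(\log\omega+\tfrac1p\log\sigma)}_p=(\tr\exp(\log\omega^p+\log\sigma))^{1/p}$.

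The crux, and the main obstacle, is concavity of $\omega\mapsto\normU{\omega}_{\sigma,p}=\bigl(\tr\exp(\log\sigma+p\log\omega)\bigr)^{1/p}$ on $\Pos(A)$ for $p\in(0,1]$. The subtle point is that it is the \emph{entire} functional, including the outer $\tfrac1p$-th power, that is concave: one cannot first establish concavity of the inner trace and then take powers, since raising a concave non-negative function to the power $\tfrac1p\ge1$ need not preserve concavity. I would therefore not reprove this but identify it as an instance of the weighted trace functionals whose concavity is established in~\cite{huang19} (with the fixed positive operator taken to be $\sigma$), after verifying that our hypotheses $p\in(0,1]$ and $\sigma\in\Pos(A)$ lie in its scope. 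Conceptually, the exponential form is the $r\to0$ limit of power functionals of Epstein type via the Lie--Trotter product formula, and this is where the genuine matrix-analytic content resides; for $p=1$ the statement degenerates to Lieb's classical concavity of $\omega\mapsto\tr\exp(\log\sigma+\log\omega)$.

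Granting concavity, super-additivity follows from a midpoint argument combining it with homogeneity: for $\omega,\omega'\in\Pos(A)$,
\begin{align*}
\normU{\omega+\omega'}_{\sigma,p}
=2\,\normU{\tfrac{\omega+\omega'}{2}}_{\sigma,p}
\ge 2\Bigl(\tfrac12\normU{\omega}_{\sigma,p}+\tfrac12\normU{\omega'}_{\sigma,p}\Bigr)
=\normU{\omega}_{\sigma,p}+\normU{\omega'}_{\sigma,p},
\end{align*}
where the first equality is homogeneity with $\alpha=2$ and the inequality is concavity evaluated at the midpoint. Together with non-negativity and homogeneity this shows that $\normU{\cdot}_{\sigma,p}$ is an anti-norm, completing the proof.
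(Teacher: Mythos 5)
Your proposal is correct and follows essentially the same route as the paper: homogeneity by direct computation, concavity outsourced to~\cite{huang19}, and super-additivity deduced from homogeneity plus concavity via the midpoint argument. The only cosmetic difference is that the paper invokes \cite[Lemma~D.1]{huang19} to reduce concavity of the $\tfrac1p$-th power to concavity of $\omega\mapsto\tr\exp(p\log\omega+\log\sigma)$ (which is Lieb's theorem~\cite{Lieb73}), whereas you cite the concavity of the full functional from~\cite{huang19} directly; both are valid, and your remark that the outer power cannot be handled by naive composition correctly identifies why the homogeneity-based reduction (or the full result of~\cite{huang19}) is needed.
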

\begin{proof}
Clearly, $\normU{\cdot}_{\sigma,p}$ is homogeneous.
Since moreover $p\in(0,1]$, \cite[Lemma~D.1]{huang19} asserts that its concavity on the set of positive matrices is equivalent to the concavity of its~$p$-th power, i.e.,
\begin{align}\label{eq_traceLieb}
\omega \mapsto \tr  \exp( p \log \omega + H ),
\end{align}
where $H = \log\sigma$.
A well-known result of Lieb~\cite{Lieb73} states that \cref{eq_traceLieb} is indeed concave for any Hermitian matrix~$H$.
Thus, $\normU{\cdot}_{\sigma,p}$ is concave.
As a consequence of homogeneity and concavity, we obtain that $\normU{\cdot}_{\sigma,p}$ is super-additive, as~$\normU{\omega+\omega'}_{\sigma,p} = 2 \normU{\frac{1}{2}\omega+ \frac{1}{2}\omega'}_{\sigma,p} \geq \normU{\omega}_{\sigma,p} + \normU{\omega'}_{\sigma,p}$ for all $\omega$, $\omega'\in\Pos(A)$.
We conclude that $\normU{\cdot}_{\sigma,p}$ is an anti-norm.
\end{proof}

\noindent
Thus, the quantum Brascamp-Lieb inequality in its analytic form \cref{eq_resBL} can be written as
\begin{align}
\tr \exp \left( \log \sigma + \sum_{k=1}^n \cE_k^\dagger(\log \omega_k) \right)
\leq\exp(C)\prod_{k=1}^n \normU{\omega_k}_{\sigma_k,1/q_k}\quad\forall\omega_k\in\St(B_k)\,,
\end{align}
where, assuming that all $q_k \geq 1$, the right-hand side can be interpreted in terms of anti-norms, pleasantly generalizing \cref{eq_BL}.

\subsection{Convexity and tensorization}\label{sec:additivity}
For fixed $n\in \N$, $\vec{\cE}=(\cE_1,\cdots,\cE_n)$ with $\cE_k \in \TPP(A,B)$, $\sigma \in \Pos(A)$, and $\vec{\sigma}=(\sigma_1,\cdots,\sigma_n)$ with $\sigma_k \in \Pos(B_k)$, we define the \emph{Brascamp-Lieb (BL) set} as
\begin{align*}
\BL\Big(\vec{\cE},\vec{\sigma},\sigma\Big) := \Big\{\big(\vec{q},C\big) \in \R_+^n\times\R :~\text{\cref{eq_resEntropy}/\cref{eq_resBL} holds} \Big\}\,.
\end{align*}
We record the following elementary property.

\begin{proposition}[Convexity]
The set $\BL(\vec{\cE},\vec{\sigma},\sigma)$ is convex.
\end{proposition}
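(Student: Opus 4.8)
The plan is to prove convexity of $\BL(\vec{\cE},\vec{\sigma},\sigma)$ directly from the entropic formulation \cref{eq_resEntropy}, since that inequality is manifestly affine in the pair $(\vec q, C)$ for each fixed $\rho$. Concretely, I would take two points $(\vec q\,^{(0)}, C^{(0)})$ and $(\vec q\,^{(1)}, C^{(1)})$ in the set and a parameter $\lambda \in [0,1]$, and show that the convex combination $(\vec q\,^{(\lambda)}, C^{(\lambda)}) := \lambda(\vec q\,^{(1)}, C^{(1)}) + (1-\lambda)(\vec q\,^{(0)}, C^{(0)})$ again satisfies \cref{eq_resEntropy} for every $\rho \in \St(A)$.

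The key observation is that for a fixed state $\rho$, the inequality \cref{eq_resEntropy} reads $\sum_{k=1}^n q_k\, D(\cE_k(\rho)\|\sigma_k) - C \leq D(\rho\|\sigma)$, and the left-hand side is a linear (hence affine) function of the vector $(\vec q, C)$, because the relative entropies $D(\cE_k(\rho)\|\sigma_k)$ and $D(\rho\|\sigma)$ depend only on the fixed data and not on $(\vec q, C)$. Therefore, writing $q_k^{(\lambda)} = \lambda q_k^{(1)} + (1-\lambda) q_k^{(0)}$ and $C^{(\lambda)} = \lambda C^{(1)} + (1-\lambda) C^{(0)}$, I would simply expand
\begin{align*}
\sum_{k=1}^n q_k^{(\lambda)} D\big(\cE_k(\rho)\|\sigma_k\big)
&= \lambda \sum_{k=1}^n q_k^{(1)} D\big(\cE_k(\rho)\|\sigma_k\big) + (1-\lambda)\sum_{k=1}^n q_k^{(0)} D\big(\cE_k(\rho)\|\sigma_k\big) \\
&\leq \lambda\big(D(\rho\|\sigma)+C^{(1)}\big) + (1-\lambda)\big(D(\rho\|\sigma)+C^{(0)}\big) \\
&= D(\rho\|\sigma) + C^{(\lambda)},
\end{align*}
where the middle inequality applies the hypothesis that both original points lie in the BL set, using $\lambda,(1-\lambda)\geq 0$. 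Since $\rho$ was arbitrary and $\vec q\,^{(\lambda)}\in\R_+^n$ (a convex combination of vectors with nonnegative entries), this shows $(\vec q\,^{(\lambda)}, C^{(\lambda)})$ belongs to $\BL(\vec{\cE},\vec{\sigma},\sigma)$, proving convexity.

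There is essentially no hard step here: the argument is a one-line consequence of the affine dependence of the entropic inequality on $(\vec q, C)$. The only point requiring a moment's care is the domain constraint, namely checking that the convex combination $\vec q\,^{(\lambda)}$ still lies in $\R_+^n$, which is immediate since $\R_+^n$ is itself convex. I would work with the entropic form \cref{eq_resEntropy} rather than the analytic form \cref{eq_resBL} precisely because the latter is not affine in $\vec q$ (the exponents $1/q_k$ enter the Schatten norms nonlinearly), so convexity is far less transparent there; the equivalence of the two forms from \cref{thm_BLrelEnt} then guarantees the same conclusion for \cref{eq_resBL} without any further work.
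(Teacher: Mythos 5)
Your proof is correct and follows essentially the same route as the paper's: both work with the entropic form \cref{eq_resEntropy}, exploit its affine dependence on $(\vec q, C)$ for fixed $\rho$, and apply the hypothesis to each endpoint of the convex combination. Your additional remark that $\R_+^n$ is convex is a harmless bit of extra care that the paper leaves implicit.
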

\begin{proof}
We use the characterization using the entropic form \cref{eq_resEntropy}.
Let~$(\vec{q}^{(i)},C^{(i)})\in\BL(\vec{\cE},\vec{\sigma},\sigma)$ for $i \in \{1,2\}$.
Let~$\theta\in[0,1]$ and $(\vec q, C)$ the corresponding convex combination, i.e., $\vec{q} := \theta \, \vec{q}^{(1)} + (1-\theta) \, \vec{q}^{(2)}$ and $C := \theta \, C^{(1)} + (1-\theta) \, C^{(2)}$.
Then, for all $\rho\in \St(A)$,
\begin{align*}
  \sum_{k=1}^n q_k D\big(\cE_k(\rho) \| \sigma_k \big)
&= \theta \sum_{k=1}^n q^{(1)}_k D\big(\cE_k(\rho) \| \sigma_k \big)
+ (1 - \theta) \sum_{k=1}^n q^{(2)}_k D\big(\cE_k(\rho) \| \sigma_k \big) \\
&\leq \theta \Bigl( D(\rho \| \sigma) + C^{(1)} \Bigr)
+ (1 - \theta) \Bigl( D(\rho \| \sigma) + C^{(2)} \Bigr)
= D(\rho \| \sigma) + C\,.
\end{align*}
Thus, $(\vec{q},C)\in\BL(\vec{\cE},\vec{\sigma},\sigma)$.
\end{proof}

In the commutative case, the BL set satisfies a \emph{tensorization property}~\cite[Section V.B]{Verdu17}, and we can ask if a similar property holds in the non-commutative case as well.
Namely, do we have that for $\left(\vec{q},C^{(i)}\right)\in \BL\Big(\vec{\cE}^{(i)},\vec{\sigma}^{(i)},\sigma^{(i)}\Big)$ with $i\in\{1,2\}$ and
\begin{align*}
\vec{\cE}:=\left(\cE_1^{(1)}\otimes\cE_1^{(2)},\dots,\cE_n^{(1)}\otimes\cE_n^{(2)}\right)\quad\text{as well as}\quad\vec{\sigma}:=\left(\sigma_1^{(1)}\otimes\sigma_1^{(2)},\dots,\sigma_n^{(1)}\otimes\sigma_n^{(2)}\right)
\end{align*}
that
\begin{align}\label{eq:tensor}
\left(\vec{q},C^{(1)}+C^{(2)}\right)\stackrel{?}{\in}\BL\left(\vec{\cE},\vec{\sigma},\sigma^{(1)}\otimes\sigma^{(2)}\right)\,.
\end{align}
As we will see in several examples (\cref{sec_examples}), tensorization does in general not hold due to the potential presence of entanglement.
Indeed, the problem of deciding in which case \cref{eq:tensor} holds can be understood as a general information-theoretic \emph{additivity problem}, which contains the (non-)additivity for the minimum output entropy as a special case (cf. \cref{eq:minoutadd} in \cref{subsec:minoutent}).

\section{Applications of quantum Brascamp-Lieb duality}\label{sec_examples}
The purpose of this section is to present examples from quantum information theory where the duality from \cref{thm_BLrelEnt} is applicable. The majority of examples concern entropy inequalities that are of interest from an operational viewpoint. \Cref{thm_BLrelEnt} then shows that all entropy inequalities of suitable structure have a dual formulation as an analytic inequality, and vice versa. Depending on the scenario, one form may be easier to prove than the other, and we find that these reformulations often give additional insight.

\subsection{Generalized (strong) sub-additivity}\label{subsec:sub-additivity}
In this section, we discuss entropy inequalities that generalize the sub-additivity and strong sub-additivity properties of the von Neumann entropy. Recall that the latter states that~$H(AB) + H(BC) \geq H(ABC) + H(B)$ for $\rho_{ABC}\in\St(A \otimes B \otimes C)$~\cite{LieRus73}.

We first state the following result from~\cite[Theorem 1.4 \& Theorem 3.1]{Lieb08}, which gives generalized sub-additivity relations and their dual analytic form. Here, the second argument in the relative entropy is always equal to the identity. Throughout this section, all quantum channels are given by partial trace channels.

\begin{corollary}[Quantum Shearer and Loomis-Whitney inequalities,~\cite{Lieb08}]\label{cor:lieb-carlen}
Let~$S_1$, \ldots, $S_n$ be non-empty subsets of $[m]$ such that every~$s\in[m]$ belongs to at least~$p$ of those subsets.
Then, the following inequalities hold and are equivalent:
\begin{align}
\label{eq_Shearer_Ent}
H(A_1 \dots A_m) &\leq \frac{1}{p} \sum_{k=1}^n H(\{A_s\}_{s\in S_k}) \quad \forall \rho \in \St(A_1 \otimes \ldots \otimes A_m) \, ,  \\
\label{eq_Shearer_Func}
\tr \exp \left( \sum_{k=1}^n \id_{\bar S_k} \otimes \log \omega_{S_k}  \right) & \leq \prod_{k=1}^n \norm{\omega_{S_k}}_p \quad \forall \omega_{S_k} \in \St(\otimes_{s\in S_k} A_s) \, ,
\end{align}
where $\bar S$ denotes the complement of a subset~$S$ of~$[m]$.
\end{corollary}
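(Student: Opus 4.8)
The plan is to read off the equivalence of \eqref{eq_Shearer_Ent} and \eqref{eq_Shearer_Func} as a direct instance of \cref{cor_BL}, and then to verify that the (now equivalent) inequalities actually hold by establishing the entropic form \eqref{eq_Shearer_Ent} from strong subadditivity. For the equivalence, I would set $A := A_1 \otimes \cdots \otimes A_m$ and, for each $k \in [n]$, take $B_k := \bigotimes_{s \in S_k} A_s$ together with $\cE_k := \tr_{\bar S_k} \in \TPCP(A,B_k) \subseteq \TPP(A,B_k)$, the partial trace that discards the subsystems in $\bar S_k$, so that $\cE_k(\rho)$ is exactly the marginal $\{A_s\}_{s \in S_k}$. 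Choosing $q_k = 1/p$ for all $k$ and $C = 0$, the entropic form \eqref{eq_Entropy} of \cref{cor_BL} becomes precisely \eqref{eq_Shearer_Ent}. For the analytic side I would compute the adjoint of the partial trace, $\cE_k^\dagger(Y) = \id_{\bar S_k} \otimes Y$ for $Y \in \LL(B_k)$, and use $1/q_k = p$; then \eqref{eq_BL} turns verbatim into \eqref{eq_Shearer_Func}. Thus the equivalence is immediate, and it remains only to prove that one of them holds.

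It then suffices to prove the entropic form: writing $H(A_S) := H(\{A_s\}_{s \in S})$, the claim is $p\, H(A_{[m]}) \leq \sum_{k=1}^n H(A_{S_k})$ whenever every $s \in [m]$ lies in at least $p$ of the sets $S_k$. The engine is strong subadditivity, which I would invoke in its equivalent guise that $S \mapsto H(A_S)$ is submodular, $H(A_{S \cup T}) + H(A_{S \cap T}) \leq H(A_S) + H(A_T)$. First I would run an uncrossing argument: replacing any crossing pair $\{S,T\}$ in the family by $\{S \cup T,\, S \cap T\}$ leaves the coverage multiplicity of every element unchanged while, by submodularity, not increasing $\sum_k H(A_{S_k})$. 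Iterating this (which terminates, e.g.\ because the potential $\sum_k \lvert S_k \rvert^2$ strictly increases at each step and is bounded) reduces the problem to a laminar family whose entropy sum lower-bounds the original one.

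For a laminar family I would then induct on the number of sets. Its inclusion-maximal members are pairwise disjoint and, since each element is covered at least once, they partition $[m]$. If there are at least two maximal members $M_1, \dots, M_r$, I would split the family into the blocks of sets contained in each $M_i$ (each a laminar family on $M_i$ with coverage still $\geq p$ and strictly fewer sets), apply the inductive bound blockwise, and recombine using subadditivity $\sum_i H(A_{M_i}) \geq H(A_{[m]})$. If instead the unique maximal set equals $[m]$, I would remove it: this contributes $H(A_{[m]}) \geq 0$, lowers the remaining coverage to $\geq p-1$, and leaves fewer sets, so the inductive hypothesis gives the residual factor $p-1$.

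The main obstacle is exactly the over-covered case $c_s > p$. The naive route---expanding each $H(A_{S_k})$ by the chain rule and comparing conditional entropies site by site via strong subadditivity---yields only $\sum_k H(A_{S_k}) \geq \sum_s c_s\, H(A_s \mid A_1 \cdots A_{s-1})$, which fails to produce the clean factor $p$ because the quantum conditional entropies $H(A_s \mid A_1 \cdots A_{s-1})$ can be negative and therefore cannot simply be discarded when $c_s > p$; equivalently, since von Neumann entropy is non-monotone one cannot shrink the sets to force exact coverage. The uncrossing-plus-laminar-induction route sidesteps this entirely, as it only ever uses submodularity and the non-negativity of the von Neumann entropy, both of which are insensitive to these sign issues.
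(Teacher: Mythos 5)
Your proposal is correct and follows essentially the same route as the paper: the equivalence of \cref{eq_Shearer_Ent} and \cref{eq_Shearer_Func} is the same instantiation of \cref{cor_BL} (with $q_k=1/p$, $C=0$, $\cE_k=\tr_{\bar S_k}$), and the entropic form is proved by the same uncrossing argument driven by submodularity of $S\mapsto H(A_S)$, i.e.\ strong subadditivity. The only difference is the endgame: the paper uncrosses \emph{all} non-nested pairs (including disjoint ones) until the family is a chain $S_1\supseteq\cdots\supseteq S_n$, which forces $S_1=\cdots=S_p=[m]$ and reduces the claim directly to non-negativity of the von Neumann entropy, whereas you stop at a laminar family and add a block-wise induction\,---\,both are valid.
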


\noindent
Inequalities in the form of \cref{eq_Shearer_Ent} have been termed \emph{quantum Shearer's inequalities} and their analytic counterparts as in \cref{eq_Shearer_Func} are known as \emph{quantum Loomis-Whitney inequalities}.
Interestingly, and as explained in~\cite[Section 1.3]{Lieb08}, the latter cannot directly be deduced from standard matrix trace inequalities such as Golden-Thompson combined with Cauchy-Schwarz.
That \cref{eq_Shearer_Ent} and \cref{eq_Shearer_Func} are equivalent follows from \cref{cor_BL} by choosing $C=0$, $q_k = \frac{1}{p}$, and $\cE_k(\cdot)=\tr_{\bar S_k}(\cdot)$. The following result provides a conditional version of the quantum Shearer inequality with side information.

\begin{proposition}[Conditional quantum Shearer inequality]\label{cor:lieb-carlen conditional}
Let~$S_1$, \ldots, $S_n$ be non-empty subsets of $[m]$ such that every~$s\in[m]$ belongs to \emph{exactly}~$p$ of those subsets.
Then,
\begin{align}\label{eq_Shearer_Ent_cond}
H(A_1 \dots A_m|B) &\leq \frac{1}{p} \sum_{k=1}^n H(\{A_s\}_{s\in S_k}|B) \quad \forall \rho \in \St(A_1 \otimes \ldots \otimes A_m \otimes B) \, .
\end{align}
\end{proposition}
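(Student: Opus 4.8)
The plan is to prove \cref{eq_Shearer_Ent_cond} directly from the chain rule and strong subadditivity of the von Neumann entropy, rather than routing it through the analytic form of \cref{thm_BLrelEnt}. The reason for not invoking the duality is that conditioning on $B$ forces the second argument of the relevant relative entropy to be $\id \otimes \rho_B$, whose $B$-part depends on the state $\rho$; since \cref{thm_BLrelEnt} requires a \emph{fixed} reference $\sigma$, it does not apply verbatim. The direct entropic argument, by contrast, is short and transparent, and it pinpoints exactly where the hypothesis `\emph{exactly} $p$' (as opposed to `at least $p$' in the unconditional \cref{cor:lieb-carlen}) becomes indispensable.

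First I would fix an ordering $1 < 2 < \cdots < m$ of $[m]$ and abbreviate $A_{<s} := A_1 \cdots A_{s-1}$. The chain rule for the conditional entropy gives, for the left-hand side,
\[
H(A_1 \cdots A_m \mid B)_\rho = \sum_{s=1}^m H(A_s \mid A_{<s} B)_\rho ,
\]
and, applied inside each $S_k$ with the induced ordering,
\[
H(\{A_s\}_{s \in S_k} \mid B)_\rho = \sum_{s \in S_k} H\big(A_s \,\big|\, \{A_{s'} : s' \in S_k,\, s' < s\}\, B\big)_\rho .
\]

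Next I would use strong subadditivity in the form ``conditioning on a larger system does not increase the conditional entropy''. Since $\{A_{s'} : s' \in S_k,\, s' < s\}$ is a subset of $A_{<s}$, each summand on the right is bounded below by $H(A_s \mid A_{<s} B)_\rho$. Summing over $k \in [n]$ and $s \in S_k$ and regrouping by $s$, the double sum becomes $\sum_{s=1}^m c_s\, H(A_s \mid A_{<s} B)_\rho$, where $c_s := |\{k : s \in S_k\}|$. Using the hypothesis $c_s = p$ for every $s$, this equals $p \sum_{s=1}^m H(A_s \mid A_{<s} B)_\rho = p\, H(A_1 \cdots A_m \mid B)_\rho$ by the first chain-rule identity, and dividing by $p$ yields the claim.

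The one genuinely delicate point\,---\,and the reason `exactly $p$' cannot be relaxed to `at least $p$' as in the unconditional \cref{cor:lieb-carlen}\,---\,is the sign of the conditional entropy. Classically the count $c_s$ could exceed $p$ and one would discard the surplus using non-negativity of $H(A_s \mid A_{<s} B)$; quantumly this quantity can be negative for entangled $\rho$, and the relaxed statement is in fact false. Already $m = 1$ with two copies $S_1 = S_2 = \{1\}$ and $p = 1$ would assert $H(A_1 \mid B)_\rho \leq 2\, H(A_1 \mid B)_\rho$, which fails whenever $H(A_1 \mid B)_\rho < 0$, e.g.\ for a maximally entangled $\rho_{A_1 B}$. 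Requiring $c_s = p$ turns the counting step into an exact identity and is thus not merely convenient but necessary. Apart from this, the argument uses only the chain rule and strong subadditivity, both valid for finite-dimensional systems, with the $H(A)_\rho < \infty$ regularity implicit in the definition of conditional entropy automatic in the finite-dimensional default setting.
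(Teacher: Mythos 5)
Your proof is correct, but it takes a genuinely different route from the paper. You prove \cref{eq_Shearer_Ent_cond} by the classical ``chain rule plus conditioning-reduces-entropy'' argument: expand both sides via the telescoping chain rule, lower-bound each term $H(A_s \mid \{A_{s'} : s' \in S_k, s' < s\} B)$ by $H(A_s \mid A_{<s} B)$ using strong sub-additivity, and regroup using the exact cover condition $c_s = p$. The paper instead uses an uncrossing argument in the style of Carlen--Lieb: strong sub-additivity in its submodularity form lets one replace any pair $S_k, S_l$ by $S_k \cup S_l, S_k \cap S_l$ while strengthening the inequality and preserving the multiplicities, so one may reduce to a nested chain $S_1 \supseteq \cdots \supseteq S_n$ with $S_1 = \cdots = S_p = [m]$, after which the claim reduces to $0 \leq \sum_{k>p} H(\{A_s\}_{s \in S_k} \mid B)$, trivially true in the conditional case since the remaining sets are empty. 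Both proofs ultimately rest on strong sub-additivity and both correctly locate why ``exactly $p$'' cannot be weakened to ``at least $p$'' (negativity of conditional entropy); your counterexample $S_1 = S_2 = \{1\}$ with a maximally entangled $\rho_{A_1 B}$ is essentially the one the paper gives. What each buys: the paper's uncrossing argument treats the unconditional \cref{cor:lieb-carlen} and the conditional case in one uniform reduction, cleanly isolating the single point where non-negativity of entropy is (or is not) available; your argument is more elementary and self-contained for the conditional case, avoids the (glossed-over) termination argument for the replacement process, and makes the role of the exact multiplicity $c_s = p$ an identity rather than an endpoint of a reduction. Your opening remark that \cref{thm_BLrelEnt} does not apply verbatim because the conditioning reference $\id \otimes \rho_B$ is state-dependent also matches the paper's own footnote on this point.
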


\noindent
For $n=2$, $S_1 = \{1\}$, $S_2 = \{2\}$, $p=1$, \cref{eq_Shearer_Ent_cond} reduces to $H(A_1 A_2 | B) \leq H(A_1 | B) + H(A_2 | B)$, which is equivalent to the strong sub-additivity of von Neumann entropy.\footnote{Our quantum BL duality (\cref{thm_BLrelEnt}) does not directly provide a dual analytic form for the strong sub-additivity of von Neumann entropy or more generally \cref{eq_Shearer_Ent_cond}. Rather, in \cref{sec:data-processing} we provide a dual analytic form for the (a priori more general) \emph{data processing inequality} of the quantum relative entropy.}

Note that, in contrast to \cref{cor:lieb-carlen}, in the conditional case it is not enough to assume that every $s\in[m]$ belongs to at least~$p$ of the subsets.
This is clear from the following proof.
For a concrete counterexample, note that for $n=2$, $S_1=S_2=\{1\}$, $S_3=\{2\}$, $p=1$, \cref{eq_Shearer_Ent_cond} is violated for, e.g., a maximally entangled state between $A_1$ and $B$.

\begin{proof}[Proof of \cref{cor:lieb-carlen,cor:lieb-carlen conditional}]
We adapt the argument of~\cite{Lieb08} to the conditional case.
If $S$ and $T$ are two subsets of $[m]$ then strong sub-additivity implies that
\begin{align*}
  H(\{A_s\}_{s\in S \cup T}|B) + H(\{A_s\}_{s\in S \cap T}|B)
\leq H(\{A_s\}_{s\in S}|B) + H(\{A_s\}_{s\in T}|B) \, .
\end{align*}
This means that we obtain a stronger version of \cref{eq_Shearer_Ent_cond} if we replace any two subsets $S_k$, $S_l$ by $S_k \cup S_l$, $S_k \cap S_l$.
Moreover, each such replacement preserves the number of times that any~$s\in[m]$ is contained in the subsets $S_1,\dots,S_n$.
We can successively apply replacement steps until we arrive at the situation where $S_k \subseteq S_l$ or $S_l \subseteq S_k$ for any two subsets.
Without loss of generality, this means that it suffices to prove \cref{cor:lieb-carlen,cor:lieb-carlen conditional} in the case that $S_1 \supseteq \dots \supseteq S_n$.
In this case, $S_1 = \dots = S_p = [m]$, since each $s\in[m]$ is contained in at least~$p$ of the subsets.
The corresponding inequality \cref{eq_Shearer_Ent_cond} can thus be simplified to
\begin{align*}
  0 \leq \sum_{k=p+1}^n H(\{A_s\}_{s\in S_k}|B).
\end{align*}
If $B=\emptyset$, as in \cref{cor:lieb-carlen}, this inequality holds since the von Neumann entropy is never negative.
And if each $s\in[m]$ belongs to \emph{exactly}~$p$ of the subsets, as in \cref{cor:lieb-carlen conditional}, then $S_{p+1} = \dots = S_n = \emptyset$, so the inequality holds trivially.
\end{proof}

\begin{remark}
\cref{cor:lieb-carlen} and \cref{cor:lieb-carlen conditional} also hold for separable Hilbert spaces, as the variational characterizations from \cref{lem_variational} hold in the general $W^*$-algebra setting~\cite{Petz_variational88}.
\end{remark}

\subsection{Brascamp-Lieb inequalities for Gaussian quantum operations}\label{sec:Gaussian}

In this section, we present quantum versions of the classical Brascamp-Lieb inequalities as in \cref{eq:introI} and \cref{eq:introII}, where probability distributions on~$\R^m$ are replaced by quantum states on~$\LL^2(\R^m)$, the Hilbert space of square-integrable wave functions on~$\R^m$.
We focus on the \emph{geometric} case discussed in the introduction.
The marginal distribution with respect to a subspace~$X\subseteq\R^m$ has the following natural quantum counterpart.
Define a TPCP map~$\cE_X$ as the composition of the unitary $\LL^2(\R^m) \cong \LL^2(X) \otimes \LL^2(X^\perp)$ with the partial trace over the second tensor factor.
Given a density operator~$\rho$ on $\LL^2(\R^m)$, we can think of \[ \rho_X=\cE_X(\rho) \] as the \emph{reduced density operator corresponding to~$X$}.
This is the natural quantum version of the marginal probability density in \cref{eq:generalized marginal} of the introduction.
Indeed, if we identify $\rho$ with its kernel in~$\LL^2(\R^m \times \R^m)$, and likewise for $\rho_k$, then we have the completely analogous formula
\begin{align*}
  \rho_k(y,y') = \int_{X^\perp} \rho\bigl(y + z, y' + z\bigr) \, \di z
  \qquad \forall y, y' \in X \, .
\end{align*}
This definition is very similar in spirit to the quantum addition operation in the quantum entropy power inequality of~\cite{Koenig14} (see also \cite{Koenig16,dePalma18}) and in fact contains the latter as a special case. In linear optical terms, $\rho_X$ can be interpreted as the reduced state of $\dim X$~many output modes obtained by subjecting~$\rho$ to a network of beamsplitters with arbitrary transmissivities.

The following result establishes quantum versions of the Brascamp-Lieb dualities as in \cref{eq:introI} and \cref{eq:introII} for the geometric case.

\begin{proposition}[Geometric quantum Brascamp-Lieb inequalities]\label{prop:geom q}
Let $X_1,\dots,X_n \subseteq \R^m$ be subspaces and let $q_1,\dots,q_n\geq0$ such that $\sum_{k=1}^n q_k \Pi_k = \id_{\R^m}$, where $\Pi_k$ denotes the orthogonal projection onto~$X_k$.
Then, for all $\rho \in \St(\LL^2(\R^m))$ with finite second moments,
\begin{align}\label{eq:geom BL entropic}
H(\rho) &\leq \sum_{k=1}^n q_k H(\rho_{X_k}) \,.
\end{align}
Furthermore, for all $\omega_{X_k} \in \St(\LL^2(X_k))$ such that $\exp\Bigl( \sum_{k=1}^n \id_{X_k^\perp} \otimes \log \omega_{X_k} \Bigr)$ has finite second moments, it holds that%
\footnote{As we only prove \cref{eq:geom BL entropic} for states with finite second moments, we can not apply \cref{thm_BLrelEnt} directly to obtain \cref{eq:analytic BL entropic} (see end of proof).
Removing this assumption is an interesting open problem. }
\begin{align}\label{eq:analytic BL entropic}
\tr \exp\Big( \sum_{k=1}^n \id_{X_k^\perp} \otimes \log \omega_{X_k} \Big) &\leq \prod_{k=1}^n \norm{\omega_{X_k}}_{1/q_k}\,.
\end{align}
\end{proposition}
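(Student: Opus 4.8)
The plan is to prove the entropic inequality \cref{eq:geom BL entropic} first and then to deduce the analytic inequality \cref{eq:analytic BL entropic} from it. I would read the geometric hypothesis $\sum_{k=1}^n q_k\Pi_k=\id_{\R^m}$ as a \emph{frame condition}: the map $x\mapsto(\sqrt{q_k}\,\Pi_k x)_k$ is an isometry of $\R^m$ into $\bigoplus_k V_k$, which lifts to a Gaussian (Bogoliubov) isometry on the level of wave functions. When all $q_k$ are integers the scalings can be replaced by $q_k$ unscaled copies of each $V_k$, and \cref{eq:geom BL entropic} then follows immediately: the von Neumann entropy is invariant under the isometry, ordinary subadditivity holds across the $\sum_k q_k$ tensor factors of $\bigotimes_k \LL^2(V_k)^{\otimes q_k}$, and each of the $q_k$ copies associated with $V_k$ carries the marginal $\rho_{V_k}=\cE_{V_k}(\rho)$. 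For generic subspaces, however, the frame condition forces irrational weights, and the scaling by $\sqrt{q_k}$ shifts the von Neumann entropy of the marginals in a way that does not cancel. I would therefore handle arbitrary real $q_k$ by a heat-flow argument.

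Concretely, introduce the isotropic additive Gaussian-noise channel $\cN_t$ on $\LL^2(\R^m)$. Since $\cN_t$ treats all quadratures equally and $\cE_{V_k}$ is a partial trace after the orthogonal splitting $\R^m=V_k\oplus V_k^\perp$, noise commutes with marginalization, $(\cN_t\rho)_{V_k}=\cN_t^{V_k}(\rho_{V_k})$. Consider the deficit $\delta(t):=\sum_{k=1}^n q_k\,H\big((\cN_t\rho)_{V_k}\big)-H(\cN_t\rho)$. By the quantum de~Bruijn identity the $t$-derivative of $H(\cN_t\,\cdot)$ equals half a quantum Fisher information $J$, so that $\delta'(t)=\tfrac12\big(\sum_k q_k\,J_{V_k}\big((\cN_t\rho)_{V_k}\big)-J(\cN_t\rho)\big)$. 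The key estimate is a quantum Fisher-information Brascamp--Lieb inequality, $J(\tau)\geq\sum_{k=1}^n q_k\,J_{V_k}(\tau_{V_k})$ for every finite-second-moment state $\tau$, which makes $\delta$ nonincreasing; together with $\delta(t)\to0$ as $t\to\infty$ — the leading large-$t$ asymptotics cancel because $\sum_k q_k\dim V_k=\tr\big(\sum_k q_k\Pi_k\big)=m$ — this gives $\delta(0)\geq0$, i.e.\ \cref{eq:geom BL entropic}. I expect this Fisher-information inequality to be the main obstacle. The natural approach mirrors the classical score-function proof: by data processing for the Fisher information under $\cE_{V_k}$ the score of $\tau_{V_k}$ is a conditional expectation of the $V_k$-projection of the score of $\tau$; Jensen's inequality discards the conditioning; and $\sum_k q_k\Pi_k=\id$ recombines the weighted projected scores into the full Fisher information of $\tau$.

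To pass from \cref{eq:geom BL entropic} to \cref{eq:analytic BL entropic} I would \emph{not} cite \cref{thm_BLrelEnt} directly, since that theorem needs the entropic inequality for \emph{all} states while I only have it under a finite-second-moment restriction. Instead I would run the implication \cref{eq_resEntropy} $\Rightarrow$ \cref{eq_resBL} of its proof by hand, with $\sigma=\id$ and $\sigma_k=\id$, applied to the single state $\rho_\star:=\exp\big(\sum_{k=1}^n\id_{V_k^\perp}\otimes\log\omega_{V_k}\big)/Z$, where $Z$ is the normalizing trace. The hypothesis that $\exp\big(\sum_k\id_{V_k^\perp}\otimes\log\omega_{V_k}\big)$ has finite second moments is exactly what ensures $\rho_\star$ does, so \cref{eq:geom BL entropic} applies to $\rho_\star$; inserting it into the same chain of variational identities built from \cref{eq_legendre} then yields \cref{eq:analytic BL entropic}. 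Apart from the Fisher-information inequality, the remaining work is analytic bookkeeping particular to the infinite-dimensional setting: justifying the de~Bruijn identity and the differentiation of $\delta$, controlling the $t\to\infty$ limit, and invoking finiteness and lower semicontinuity of the von Neumann entropy for energy-bounded states.
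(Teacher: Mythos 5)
Your proposal follows essentially the same route as the paper: the entropic inequality is proved by the quantum heat-flow/de Bruijn argument, reduced to the super-additivity of Fisher information $\sum_k q_k J(\tau_{V_k})\leq J(\tau)$, which is established via the data-processing inequality for Fisher information applied to phase-space-covariant families and the frame condition $\sum_k q_k\Pi_k=\id$, with the $t\to\infty$ asymptotics matched by $\sum_k q_k\dim V_k=m$. Your treatment of the analytic form\,---\,rerunning the entropic-to-analytic implication on the single finite-second-moment state $\rho_\star$ using the restricted variational formulas\,---\,is exactly the paper's workaround for not having \cref{eq:geom BL entropic} for all states, so the argument is correct as proposed (the factor $\tfrac12$ vs.\ $\tfrac14$ in the de Bruijn identity is only a normalization convention).
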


\noindent
Note that if~$X_k$ is spanned by a subset $S_k\subseteq[m]$ of the~$m$~coordinates of~$\R^m$, then $\rho_{X_k}$ is nothing but the reduced density matrix of subsystems~$S_k$, which appears on the right-hand side of the quantum Shearer inequality \cref{eq_Shearer_Ent}. Thus, \cref{prop:geom q} implies \cref{cor:lieb-carlen} in the case that all $s\in[m]$ are contained in \emph{exactly} $p$ of the subsets~$S_k$.

To establish \cref{prop:geom q}, we will first prove the entropic form \cref{eq:geom BL entropic} using a quantum version of the heat flow approach from~\cite{Carlen04,Barthe2004} (cf.~the recent works~\cite{carlen2014analog,carlen2017gradient,carlen2018noncommutative} on entropy inequalities for quantum Markov semigroups).
We assume some familiarity with Gaussian quantum systems (see, e.g.,~\cite{holevo_book}) and follow the framework of K\"onig and Smith~\cite{Koenig14}, which holds under regularity assumptions on the quantum state, which were subsequently removed by De Palma and Trevisan~\cite{dePalma18}.

Let $X \subseteq \R^m$ be a subspace and~$m_X$ its dimension. For all~$x\in X$, define position and momentum operators on~$\LL^2(X)$ by $(Q_{X,x} \psi)(y) := (x \cdot y) \psi(y)$ and~$P_{X,x} := -\ci\partial_x$.
Denote by $\mathcal N_t^X$ the \emph{non-commutative heat flow} or \emph{heat semigroup}~\cite{Koenig14,dePalma18}, which is a one-parameter semi-group, meaning $\mathcal N_0^X = \id$ and $\mathcal N_s^X \circ \mathcal N_t^X = \mathcal N_{s+t}^X$ for $s,t\geq0$. On a suitable domain it is generated by
\begin{align*}
\cL_X := - \frac 1 4 \sum_{j=1}^{m_X} [Q_{X,e_j}, [Q_{X,e_j}, \rho]] + [P_{X,e_j}, [P_{X,e_j}, \rho]]\,,
\end{align*}
where $\{e_j\}_{j=1}^{m_X}$ is an arbitrary orthonormal basis of $X$ (but we will not directly use this specific form). For every~$t\geq0$, $\mathcal N^X_t$ is a Gaussian TPCP map, hence fully determined by its action on covariance matrices and mean vectors,%
\footnote{The \emph{mean vector} of a quantum state~$\rho$ on~$\LL^2(X)$ is the linear form~$\mu$ on $X\oplus X$ given by~$\mu(v) = \tr \rho R_{X,v}$, where $R_{X,v} = Q_{X,x} + P_{X,y}$ for $v=(x,y)$; the \emph{covariance matrix} of $\rho$ is the quadratic form $\Sigma$ defined by $\Sigma(v,v') = \tr \rho \{R_{X,v} - \mu(v), R_{X,v'} - \mu(v')\}$.}
which is given by
\begin{align}\label{eq:heat flow}
  \Sigma \mapsto \Sigma + t \id \quad\text{and}\quad \mu \mapsto \mu \, .
\end{align}
In particular, the heat flow is independent of the choice of orthonormal basis in~$X$. The generalized partial trace maps~$\cE_X\colon\rho\mapsto\rho_X$ defined above are also Gaussian and act by
\begin{align}\label{eq:gaussian marginal}
  \Sigma \mapsto \Sigma|_X \quad\text{and}\quad \mu \mapsto \mu|_X \, ,
\end{align}
where $\mu|_X$ denotes the restriction of $\mu$ onto $X\oplus X$ and likewise for $\Sigma|_X$.
The non-commutative heat flow is compatible with the maps $\cE_X$, i.e.,
\begin{align*}
  \cE_X \circ \mathcal N_t^{\R^m} = \mathcal N_t^X \circ \cE_X.
\end{align*}
Indeed, since both channels (and hence their composition) are Gaussian, it suffices to verify that the action commutes on the level of mean vectors and covariance matrices, and the latter is clear from \cref{eq:heat flow} and \cref{eq:gaussian marginal}. See also \cite[Lemma 2]{dePalma18}. Thus, we may unambiguously introduce the notation
\begin{align}\label{eq:evolved gaussian marginal}
  \rho^{(t)}_X := \cE_X\big(\mathcal N_t^{\R^m}(\rho)\big) = \mathcal N_t^X\big(\cE_X(\rho)\big) = \mathcal N_t^X\big(\rho_X\big)
\end{align}
for the reduced density operator on~$\LL^2(X)$ at time~$t$. Similarly, we may show that~$\cE_X$ is compatible with phase-space translations (cf.~\cite[Lemma XI.1]{Koenig14}).
For $x\in X$, define the unitary one-parameter groups
\begin{align*}
\text{$\cQ_{X,x}^{(t)}(\rho) := \ee^{-\ci t P_{X,x}} \rho \, \ee^{\ci t P_{X,x}}$ and~$\cP_{X,x}^{(t)}(\rho) := \ee^{\ci t Q_{X,x}} \rho \, \ee^{-\ci t Q_{X,x}}$.}
\end{align*}
They are Gaussian, leave the covariance matrices invariant, and send mean vectors~$\mu \mapsto \mu + t (x^T,0)$ and~$\mu + t (0,x^T)$, respectively.
By comparing with \cref{eq:gaussian marginal}, we find that
\begin{align}\label{eq:gaussian marginal vs translations}
  \cQ_{X,x}^{(t)} \circ \cE_X = \cE_X \circ \cQ_{\R^m,x}^{(t)}
  \quad\text{and}\quad
  \cP_{X,x}^{(t)} \circ \cE_X = \cE_X \circ \cP_{\R^m,x}^{(t)} \, .
\end{align}

In the following we shall make use of two crucial properties of the heat flow that will allow us to `linearize' the proof of the entropy inequality:
First, the entropy of~$\rho_X^{(t)}$ grows logarithmically as $t \rightarrow \infty$ and becomes \emph{asymptotically independent} of the state $\rho$, as proved in~\cite[Corollary~III.4]{Koenig14} and~\cite[Theorem 5]{dePalma18}:
\begin{align*}
  \big|H\big(\rho^{(t)}_X\big) / m_X - (1 - \log 2 + \log t) \big| \rightarrow 0\,.
\end{align*}
In particular, this implies that any valid inequality of the form \cref{eq:geom BL entropic} must satisfy the inequality
\begin{align}\label{eq:asymptotic}
  \sum_{k=1}^n q_k m_{X_k} \geq m \, ,
\end{align}
since this is precisely equivalent to the validity of \cref{eq:geom BL entropic} as $t\to\infty$.
For us, this condition follows by taking the trace on both sides of our assumption that~$\sum_{k=1}^n q_k \Pi_k = \id_{\R^m}$. To state the second property of the heat flow that we will need, we momentarily assume sufficient regularity of the states under consideration, following~\cite{Koenig14}. Then, the \emph{Fisher information} of a one-parameter family of states~$\big\{\sigma^{(s)}\big\}$ is defined as
\begin{align*}
J\big(\big\{\sigma^{(s)}\big\}\big) := \partial^2_{s=0} D\big(\sigma^0\big\|\sigma^{(s)}\big)\,.
\end{align*}
It satisfies the following version of the data processing inequality~\cite[Theorem IV.4]{Koenig14}:
For any TPCP map $\mathcal E$,
\begin{align}\label{eq:fisher dpi}
  J\big(\big\{\mathcal E\big(\sigma^{(s)}\big)\big\}\big) \leq J\big(\big\{\sigma^{(s)}\big\}\big) \, .
\end{align}
For a covariant family of the form~$\sigma^{(s,K)} := \ee^{\ci s K} \sigma \, \ee^{-\ci s K}$, the Fisher information can be computed as~\cite[Lemma IV.5]{Koenig14}
\begin{align}\label{eq:fisher covariant}
  J\big(\{\sigma^{(s,K)}\}\big) = \tr \sigma [K,[K,\log \sigma]]\,.
\end{align}
We can now state the \emph{quantum de Bruijn identity}~\cite[Theorem V.1]{Koenig14}, which computes the derivative of the entropy along the heat flow in terms of the Fisher information:
\begin{align}\label{eq:de bruijn}
  \partial_t H\big(\rho^{(t)}_X\big) = \frac 1 4 J\big(\rho^{(t)}_X\big) \, ,
\end{align}
where the \emph{total Fisher information} $J(\sigma_X)$ of a state $\sigma_X$ on $\LL^2(X)$ is defined by
\begin{align}\label{eq:total fisher}
  J(\sigma_X) := \sum_{j=1}^{m_X} J\big(\big\{\sigma^{(s,Q_{X,e_j})}\big\}\big) + J\big(\big\{\sigma^{(s,P_{X,e_j})}\big\}\big)\, ,
\end{align}
for an arbitrary orthonormal basis~$\{e_j\}_{j=1}^{m_X}$ of~$X$. While above we assumed regularity, the Fisher information $J(\sigma_X)$ can be defined for any state with finite second moments, and the de Bruijn identity~\eqref{eq:de bruijn} generalizes as well~\cite[Definition~7 \& Proposition~1]{dePalma18}.%
\footnote{\label{foot:depalma}%
The key idea is to first define an \emph{integral} version of the Fisher information~\cite[Definition~6]{dePalma18}.
In the setting without side information, this is defined for a state~$\sigma_X$ on~$\LL^2(X)$ and for~$t>0$ by~$\Delta(\sigma_X)(t) := I(X:V)_{\sigma_{XV}(t)}$, where~$\sigma_{XV}(t)$ denotes the classical-quantum state with~$V$ is a multivariate Gaussian random variable with covariance matrix~$t (I_X \oplus I_X)$ and~$\sigma_{X|V=v} = \mathcal D_{X,v} \sigma_X \mathcal D_{X,v}^\dagger$, with~$\mathcal D_{X,v} = \mathcal Q_{X,x}^{(1)} \circ \mathcal P_{X,y}^{(1)}$ for $v=(x,y) \in X \oplus X$; one also sets~$\Delta_X(\sigma_X)(0) := 0$.
This is well-defined for any state~$\sigma_X$ and satisfies a finitary version of the de Bruijn identity~\cite[Theorem~1]{dePalma18}.
Moreover, if~$\sigma_X$ is a state with finite energy then~$\Delta(\sigma_X)(t)$ is continuous, increasing, and concave as a function of~$t\geq0$.
Hence, for such states one can define the Fisher information~$J(\sigma_X)$ as the (right) derivative of~$\Delta(\sigma_X)(t)$ at $t=0$, that is, as $J(\sigma_X) = \lim_{t\downarrow 0} \frac {\Delta(\sigma_X)(t)} t$~\cite[Definition~7]{dePalma18}.
Then the de Bruijn identity~\eqref{eq:de bruijn} follows directly from its finitary version~\cite[Proposition~1]{dePalma18}.}

\begin{proof}[Proof of \cref{prop:geom q}]
We first prove the entropic \cref{eq:geom BL entropic} by considering~$\rho^{(t)} := \rho^{(t)}_{\R^m}$ for $t\geq0$.
As $t\to\infty$, \cref{eq:geom BL entropic} holds up to arbitrarily small error, as explained below \cref{eq:asymptotic}.
To show its validity at~$t=0$, we would therefore like to argue that~$\partial_t H(\rho^{(t)}) \geq \sum_{k=1}^n q_k \, \partial_t H\big(\rho^{(t)}_{X_k}\big)$ for all~$t\geq0$.
In view of the de Bruijn identity in \cref{eq:de bruijn} and \cref{eq:evolved gaussian marginal}, it suffices to establish the following \emph{super}-addivity property of the Fisher information for all states~$\sigma$ on $\LL^2(\R^m)$ with finite second moment:
\begin{align}\label{eq:fisher ieq}
  \sum_{k=1}^n q_k J(\sigma_{X_k}) \leq J(\sigma) \, .
\end{align}
We first prove this under the regularity assumptions of \cite{Koenig14}, so that \cref{eq:fisher covariant} applies. We will abbreviate~$Q_j := Q_{\R^m,e_j}$ and~$P_j := P_{\R^m,e_j}$, where~$\{e_j\}_{j=1}^m$ is the standard basis of~$\R^m$.
For all~$x \in X_k$, it holds that
\begin{align*}
  J\big(\big\{\sigma_{X_k}^{(s,Q_{X_k,x})}\big\}\big)
&= J\big(\big\{\cP^{(s)}_{X,x}(\cE_{X_k}(\sigma))\big\}\big) \\
&= J\big(\big\{\cE_{X_k}(\cP^{(s)}_{\R^m,x}(\sigma))\big\}\big) \\
&\leq J\big(\big\{\cP^{(s)}_{\R^m,x}(\sigma)\big\}\big) \\
&= \tr \sigma [Q_{\R^m,x}, [Q_{\R^m,x}, \log \sigma]] \\
&= \sum_{j,j'=1}^m x_j x_{j'} \tr \sigma [Q_j, [Q_{j'}, \log\sigma]] \\
&= \sum_{j,j'=1}^m \big(x x^T\big)_{j,j'} \tr \sigma [Q_j, [Q_{j'}, \log\sigma]] \,
\end{align*}
where the second step is by the compatibility of phase-space translations and generalized partial trace~\eqref{eq:gaussian marginal vs translations}, the third step uses the data-processing inequality for the Fisher information \cref{eq:fisher dpi}, and the fourth step follows from \cref{eq:fisher covariant}. If we apply the same argument to $J\big(\big\{\sigma_{X_k}^{(s,P_{X,x})}\big\}\big)$ and sum both inequalities over an orthonormal basis $\{x\}$ of $X_k$, we obtain
\begin{align}\label{eq:step1}
  J(\sigma_{X_k})
\leq 
J(\sigma, \Pi_k) \, ,
\end{align}
where we used the shortcut $J(\sigma, A) := \sum_{j,j'=1}^m A_{j,j'} \bigl( \tr \sigma [Q_j, [Q_{j'}, \log\sigma]] + \tr \sigma [P_j, [P_{j'}, \log\sigma]] \bigr)$ for any positive semidefinite $m\times m$ matrix~$A$, which is linear in $A$. Thus, our assumption that~$\sum_{k=1}^n q_k \Pi_k = \id_{\R^m}$ (with all~$q_k\geq0$) implies the desired inequality:
\begin{align}\label{eq:step2}
  \sum_k q_k \, J(\sigma_{X_k}) \leq \sum_k q_k \, J(\sigma, \Pi_{k}) = J(\sigma, \sum_k q_k \Pi_{k}) = J(\sigma, \id_{\R^m}) = J(\sigma)\,.
\end{align}
This establishes \cref{eq:fisher ieq} and hence \cref{eq:geom BL entropic} for states that are sufficiently regular.
While \cref{eq:fisher covariant} need not apply in general, the Fisher information $J(\sigma)$ and the de Bruijn identity~\eqref{eq:de bruijn} have been generalized to arbitrary states with finite second moments~\cite{dePalma18},  as discussed above. The quantity~$J(\sigma,A)$ can be defined in the same manner so that \cref{eq:step1,eq:step2} hold verbatim, see~\cite[Definition~6, Propositions~6 \& 9]{dePalma21}.%
\footnote{%
The idea is the same in \cref{foot:depalma}.
One first defines an integral quantity~$\Delta(\sigma,A)$ just like for the Fisher information, except that the multivariate Gaussian random variable now has covariance matrix $A \oplus A$~\cite[Definition~5]{dePalma21}.
If~$\sigma_X$ is a state with finite energy then~$t \mapsto \Delta(\sigma,tA)$ is again continuous, increasing, and concave~\cite[Proposition~5]{dePalma21}, and hence one can define~$J(\sigma,A) := \lim_{t\downarrow 0} \frac {\Delta(\sigma,tA)} t$~\cite[Definition~6]{dePalma21}.
Then $J(\sigma,A)$ satisfies a Stam inequality~\cite[Prop.~9]{dePalma21} that implies~\eqref{eq:step1}, and it is still a linear function of~$A$ (for nonnegative linear combinations)~\cite[Prop.~6]{dePalma21}, which is what we used in~\eqref{eq:step2}.}

The analytic form in \cref{eq:analytic BL entropic} then follows from a slight extension of \cref{thm_BLrelEnt}, or more specifically the special case discussed in \cref{cor_BL}. Namely, we need to incorporate on the entropic side the finite second moment assumption from \cref{eq:geom BL entropic}. By inspection, the variational formulae from \cref{lem_variational} applied to operators with finite second moment still hold for the respective suprema only taken over operators with finite second moment. Hence, following the proof of the BL duality in \cref{thm_BLrelEnt}, we can still go from the entropic to the analytic form when assuming that the operator in exponential form on the left hand side of the analytic form has finite second moment.
\end{proof}

While the preceding discussion restricted to the geometric case, we can also consider the general case of surjective linear map~$L_k\colon \R^m \to \R^{m_k}$, as in \cref{sec:intro}. For this, write~$L_k$ as the composition of an invertible map~$M_k\in\GL(m)$ and the projection onto the first~$m_k$ coordinates. Define a unitary operator~$U_k$ on~$\LL^2(\R^m)$ by $(U_k g)(x) := g(M_k^{-1} x) / \sqrt{|\det M_k|}$. Then, $\cE_k(\rho) := \tr_{m_k+1,\dots,m}(U_k \rho U_k^\dagger)$ defines a TPCP map that is the natural quantum version of the marginalization $g\mapsto g_k$ (same notation as in \cref{eq:introII}). We leave it for future work to determine under which conditions such quantum Brascamp-Lieb inequalities hold in general.

\emph{Note added}: In follow-up work, Eq.~\eqref{eq:geom BL entropic} from Proposition \ref{prop:geom q} has been extended to the conditional case with side information \cite[Theorem 7.3]{Ligthart20} for Gaussian states, based on~\cite{Koenig16}.
Subsequently, the latter assumption was removed by De Palma and Trevisan~\cite{dePalma21}, who further generalized \cref{prop:geom q} and also fully resolved the aforementioned question.

\subsection{Entropic uncertainty relations}\label{subsec:entropic}
In this section, we explain how the duality of \cref{thm_BLrelEnt,cor_BL} offers an elegant way to prove entropic uncertainty relations (cf.~the related work~\cite{Schwonnek18}). In order to compare our uncertainty bounds with the previous literature, we work in the current subsection with the explicit logarithm function relative to base two.

\begin{example}[Maassen-Uffink]
For $\rho_A\in\St(A)$ the \emph{Maassen-Uffink entropic uncertainty relation}~\cite{MaaUff88} for two arbitrary basis measurements,
\begin{align*}
M_{\X}(\cdot)=\sum_x\langle x|\cdot|x\rangle|x\rangle\langle x|_X \; \quad \text{and}\; \quad  M_{\Z}(\cdot)=\sum_z\langle z|\cdot|z\rangle|z\rangle\langle z|_Z \,,
\end{align*}
asserts in its strengthened form~\cite{berta_10} that
\begin{align}\label{eq:MU}
H(X) + H(Z) \geq -\log c(X,Z) + H(A)\quad\text{with $c(X,Z):=\max_{x,z} | \! \left \langle x|z \right \rangle \! |^2$.}
\end{align}
The constant $c(X,Z)$ is tight in the sense that there exist quantum states that achieve equality for certain measurement maps.
\cref{eq_BL} of \cref{cor_BL} for $n=2$, $q_1=q_2=1$, $\cE_1=M_{\X}$, and $\cE_2=M_{\Z}$ then immediately gives the equivalent analytic form
\begin{align}\label{eq_BLmaassen}
\tr \exp \left( M_{\X}^\dagger( \log\omega_1) + M_{\Z}^\dagger(\log\omega_2) \right) \leq c(X,Z) \quad\forall\omega_1,\omega_2\in\St(A)\, .
\end{align}
In other words, in order to prove \cref{eq:MU} it suffices to show \cref{eq_BLmaassen}.
Now, since the logarithm is operator concave and $M_{\X}^\dagger$ is a unital map, the operator Jensen inequality~\cite{hansen03} implies
\begin{align*}
M_{\X}^\dagger(\log X_1) \leq \log M_{\X}^\dagger(X_1)\,.
\end{align*}
Together with the monotonicity of $X \mapsto \tr \exp(X)$~\cite[Theorem~2.10]{carlen_book} and the Golden-Thompson inequality\footnote{The Golden-Thompson inequality ensures that for all Hermitian matrices $H_1$ and $H_2$ we have $\tr \exp(H_1 + H_2) \leq \tr \exp(H_1) \exp(H_2)$.} \cite{golden65,thompson65}, this establishes the analytic form of \cref{eq_BLmaassen}
\begin{align*}
\tr \exp \left( M_{\X}^\dagger( \log \omega_1) + M_{\Z}^\dagger(\log \omega_2) \right)
&\leq \tr \exp \left(  \log M_{\X}^\dagger(\omega_1) + \log M_{\Z}^\dagger(\omega_2) \right) \\
&\leq \tr M_{\X}^\dagger(\omega_1) M_{\Z}^\dagger(\omega_2)\\
&\leq c(X,Z)\,.
\end{align*}
Thus, the entropic Maassen-Uffink relation \cref{eq:MU} follows from our \cref{cor_BL}.

We note that the approach of proving entropic uncertainty relations via the Golden-Thompson inequality was pioneered by Frank \& Lieb \cite{Frank2013} and is conceptually different from the original proofs that are either based on complex interpolation theory for Schatten $p$-norms~\cite{MaaUff88} or the monotonicity of quantum relative entropy~\cite{Coles12}. We refer to~\cite{coles17} for a review on entropic uncertainty relations. As a possible extension one could choose non-trivial pre-factors $q_k\neq1$ and study the optimal uncertainty bounds in that setting as well (as done in~\cite{Schwonnek18} without the $H(A)$ term).
Another natural extension is to general quantum channels instead of measurements (as detailed in~\cite{berta16,Junge18}). The constant $c(X,Z)$ from \cref{eq:MU} is \emph{multiplicative} for tensor product measurements.
However, we might ask more generally if for given measurements the optimal lower bound in \cref{eq:MU} becomes multiplicative for tensor product measurements.
This amounts to an instance of the tensorization question from \cref{eq:tensor} and we refer to~\cite{Schwonnek18,Junge18} for a discussion.
\end{example}

An advantage of our BL analysis is that it suggests tight generalizations to multiple measurements by means of the multivariate extension of the Golden-Thompson inequality~\cite{Sutter2017}.
A basic example is as follows.

\begin{example}[Six-state~\cite{Coles11}]\label{ex:six-state}
For $\rho_A\in\St(A)$ with $\text{dim}(A)=2$ and measurement maps $M_{\X}, M_{\Y}, M_{\Z}$ of the Pauli matrices $\sigma_X,\sigma_Y,\sigma_Z$ we have
\begin{align}\label{eq_UR3}
H(X) + H(Y) + H(Z) \geq 2+H(A)\,.
\end{align}
Moreover, this relation is tight in the sense that there exist quantum states that achieve equality.
Note that applying the Maassen-Uffink relation \cref{eq:MU} for any two of of the three Pauli measurements only yields the weaker bound
\begin{align*}
H(X) + H(Y) + H(Z) \geq\frac{3}{2}+\frac{3}{2}H(A)\,.
\end{align*}
The equivalent analytic form of \cref{eq_UR3} is given by \cref{cor_BL} as
\begin{align*}
\tr \exp\left(M^\dagger_{\X}(\log \omega_1) + M^\dagger_{\Y}(\log \omega_2)+M^\dagger_{\Z}(\log \omega_3)\right) \leq \frac{1}{4}  \quad\forall\omega_1,\omega_2, \omega_3\in\St(A)\,.
\end{align*}
The same steps as in the proof of the Maassen-Uffink relation, together with Lieb's triple matrix inequality~\cite{Lieb73} then yield the upper bound\footnote{Lieb's triple matrix inequality corresponds to the three matrix Golden-Thompson inequality from~\cite{Sutter2017}.}
\begin{align*}
&\int_{0}^{\infty}  \tr M_{\X}(\omega_1) \frac{1}{M_{\Z}(\omega_3)^{-1} +t} M_{\Y}(\omega_2)   \frac{1}{M_{\Z}(\omega_3)^{-1} +t} \di t  \\
&=\sum_{x,y} \bra{x} \omega_1 \ket{x}  \bra{y} \omega_2 \ket{y}    \int_{0}^{\infty}     |\bra{x}  \frac{1}{M_{\Z}(\omega_3)^{-1} +t}  \ket{y}|^2 \di t \\
&\leq \max_{x,y}  \int_{0}^{\infty}   |\bra{x} \frac{1}{M_{\Z}(\omega_3)^{-1} +t} \ket{y}|^2  \di t\,.
\end{align*}
In the penultimate step we used that
\begin{equation}\label{eq:paulis}
\begin{aligned}
&M_{\X}(\omega) = \sum_{x \in \{x_0,x_1\}} \bra{x} \omega \ket{x} \proj{x} \quad \text{where } \left\{\ket{x_0} = \frac{1}{\sqrt{2}}(1,1)^T , \, \ket{x_1} = \frac{1}{\sqrt{2}}(1,-1)^T  \right\}\\
&M_{\Y}(\omega) = \sum_{y \in \{y_0,y_1\}} \bra{y} \omega \ket{y} \proj{y} \quad \text{where } \left\{\ket{y_0} = \frac{1}{\sqrt{2}}(1,\ci)^T , \, \ket{y_1} = \frac{1}{\sqrt{2}}(1,-\ci)^T \right\}\\
&M_{\Z}(\omega) = \sum_{z \in \{z_0,z_1\}} \bra{z} \omega \ket{z} \proj{z} \quad \text{where } \left\{\ket{z_0} = (1,0)^T , \, \ket{z_1} = (0,1)^T \right\}.
\end{aligned}
\end{equation}
As $(M_{\Z}(\omega_3)^{-1} +t)^{-1}= \sum_{z} \frac{1}{\bra{z} \omega_3 \ket{z}^{-1} + t} \proj{z}$, we get
\begin{align*}
\left|\bra{x} \frac{1}{M_{\Z}(\omega_3)^{-1} +t} \ket{y}\right|^2
= \left|\sum_{z} \frac{1}{\bra{z} \omega_3 \ket{z}^{-1} + t} \langle x|z\rangle \langle z|y\rangle\right|^2\,.
\end{align*}
Together with $ \langle x|z_0\rangle \langle z_0|y\rangle = \frac{1}{2}$ and $ \langle x|z_1\rangle \langle z_1|y\rangle = \pm \frac{\ci}{2}$ for all $x \in \{x_0,x_1\},\, y \in \{y_0,y_1 \}$ we find the upper bound
\begin{align*}
\frac{1}{4}  \int_{0}^{\infty}  \Big( (\bra{z_0} \omega_3 \ket{z_0}^{-1} + t)^{-2} + (\bra{z_1} \omega_3 \ket{z_1}^{-1} +t)^{-2}  \Big)\di t
&= \frac{1}{4}  \big( \bra{z_0} \omega_3 \ket{z_0} + \bra{z_1} \omega_3 \ket{z_1}  \big)
=\frac{1}{4}\,.
\end{align*}
This then concludes the proof of the six-state entropic uncertainty relation \cref{eq_UR3}.
\end{example}

\subsection{Minimum output entropy}\label{subsec:minoutent}
The Brascamp-Lieb duality from \cref{thm_BLrelEnt,cor_BL} is also applied usefully to general quantum channels.
Recall that the \emph{minimum output entropy} of a map $\cE \in \TPCP(A,B)$ is defined by
\begin{align}\label{eq_minOutput}
H_{\min}(\cE):= \min_{\rho \in \St(A)} H\big( \cE(\rho) \big) \, .
\end{align}
The computation of minimum output entropy is in general NP-complete~\cite{beigi07}.
Nevertheless, it is a fundamental information measure~\cite{Shor2004} that has been used, e.g., to prove super-additivity of the Holevo information~\cite{hastings09}.
\Cref{cor_BL} for $n=2$, $q_1=q_2=1$, $\cE_1=\cI$, and $\cE_2 = \cE$ gives the following result.

\begin{corollary}[Minimum output entropy]\label{cor_minOutput}
For $\cE \in \TPCP(A,B)$ and $C \in \R$, the following two statements are equivalent:
\begin{align}\label{eq_minOut}
C &\leq H\big( \cE(\rho) \big) \quad \forall  \rho \in \St(A) \, , \\
\label{eq_minOutDual}
\tr\exp(\log \omega_1 + \cE^\dagger(\log \omega_2)) &\leq\exp(-C)  \quad \forall  \omega_1 \in \St(A),\,\omega_2 \in \St(B) \, .
\end{align}
Moreover, we have
\begin{align}\label{eq_minOutProp}
H_{\min}(\cE) = - \max_{\omega \in \St(B)} \lambda_{\max}(\cE^{\dagger}( \log \omega)) \, .
\end{align}
\end{corollary}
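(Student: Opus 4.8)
The plan is to obtain the corollary entirely from \cref{cor_BL} together with the variational formula \cref{eq_legendre}, so the only real work is a short optimization.

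For the equivalence of \cref{eq_minOut} and \cref{eq_minOutDual}, I would specialize \cref{cor_BL} to $n=2$, $q_1=q_2=1$, $\cE_1=\cI$, and $\cE_2=\cE$, with the constant $-C$ in place of $C$. On the entropic side \cref{eq_Entropy} the term $H(\cE_1(\rho))=H(\rho)$ cancels the left-hand side $H(\rho)$, leaving $0\le H(\cE(\rho))-C$, which is \cref{eq_minOut}. On the analytic side \cref{eq_BL} we have $\cE_1^\dagger=\cI$ and $1/q_k=1$; since $\omega_1,\omega_2$ are density operators their trace norms equal one, so the right-hand product collapses to $\exp(-C)$ and we recover \cref{eq_minOutDual}. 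This exhausts the first claim.

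For the formula \cref{eq_minOutProp} I would read off from this equivalence that $H_{\min}(\cE)$ is precisely the largest $C$ for which \cref{eq_minOut}, equivalently \cref{eq_minOutDual}, holds, so that
\begin{align*}
\exp\bigl(-H_{\min}(\cE)\bigr) = \sup_{\omega_1\in\St(A),\,\omega_2\in\St(B)} \tr\exp\bigl(\log\omega_1 + \cE^\dagger(\log\omega_2)\bigr)\,.
\end{align*}
It then remains to perform the inner optimization over $\omega_1$ for fixed $H:=\cE^\dagger(\log\omega_2)\in\Her(A)$. Here I would invoke \cref{eq_legendre} with $\log\sigma=\log\omega_1$, which gives $\log\tr\exp(H+\log\omega_1)=\sup_{\omega\in\St(A)}\{\tr H\omega - D(\omega\|\omega_1)\}$; taking the additional supremum over $\omega_1$ and using $\inf_{\omega_1}D(\omega\|\omega_1)=0$ (attained at $\omega_1=\omega$) leaves $\sup_{\omega}\tr H\omega=\lambda_{\max}(H)$. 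Exponentiating, taking the supremum over $\omega_2$, and taking logarithms then produces $-H_{\min}(\cE)=\sup_{\omega_2}\lambda_{\max}(\cE^\dagger(\log\omega_2))$, which is \cref{eq_minOutProp}.

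The steps that require the most care are the boundary and attainment issues, not the algebra: the logarithms force a restriction to full-support states (legitimized by the final clause of \cref{thm_BLrelEnt}), and one must argue that the outer supremum over $\omega_2$ is genuinely attained, so that it is a maximum. A clean way to settle this\,---\,and in fact a self-contained alternative route to \cref{eq_minOutProp} that bypasses the supremum manipulation\,---\,is to let $\rho^\star$ attain the (compactly attained) minimum $H_{\min}(\cE)=H(\cE(\rho^\star))$ and set $\omega_2=\cE(\rho^\star)$: then $\tr\rho^\star\cE^\dagger(\log\omega_2)=\tr\cE(\rho^\star)\log\cE(\rho^\star)=-H_{\min}(\cE)$, so $\lambda_{\max}(\cE^\dagger(\log\omega_2))\ge -H_{\min}(\cE)$; conversely, positivity of the relative entropy (Klein's inequality) yields $\tr\cE(\rho)\log\omega\le\tr\cE(\rho)\log\cE(\rho)$, which applied to the top eigenvector $\rho$ of $\cE^\dagger(\log\omega)$ gives the matching bound $\lambda_{\max}(\cE^\dagger(\log\omega))\le -H_{\min}(\cE)$ for every $\omega$. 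Both bounds together prove the formula and exhibit the maximizer.
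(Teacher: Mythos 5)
Your proposal is correct. The equivalence of \cref{eq_minOut} and \cref{eq_minOutDual} is derived exactly as in the paper: specialize \cref{cor_BL} to $n=2$, $q_1=q_2=1$, $\cE_1=\cI$, $\cE_2=\cE$ with constant $-C$, so that $H(\cE_1(\rho))$ cancels $H(\rho)$ on the entropic side and the unit trace norms collapse the right-hand side on the analytic side. For \cref{eq_minOutProp} the paper gives two arguments, and your two routes track them but with one genuine substitution. The paper's analytic proof evaluates $\max_{\omega_1\in\St(A)}\tr\exp(\log\omega_1+H)=\exp(\lambda_{\max}(H))$ via the Golden--Thompson inequality together with an attainment argument using a rank-one projector commuting with $H$; you obtain the same identity instead from the Legendre formula \cref{eq_legendre} with $\sigma=\omega_1$ and the observation that $\inf_{\omega_1}D(\omega\|\omega_1)=0$. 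This is arguably cleaner: it stays entirely within the variational machinery of \cref{lem_variational} already used to prove \cref{thm_BLrelEnt}, avoids invoking Golden--Thompson, and sidesteps the boundary subtlety of taking the logarithm of a rank-one $\omega_1$. Your self-contained alternative\,---\,testing $\omega_2=\cE(\rho^\star)$ for the lower bound and using Klein's inequality on the top eigenvector for the upper bound\,---\,is essentially the paper's first, entropic proof via \cref{eq_varRelEnt} recast as a two-sided estimate, with the added benefit of exhibiting the maximizer explicitly and thereby justifying that the supremum in \cref{eq_minOutProp} is attained, a point the paper leaves implicit.
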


\noindent
It is unclear if the form \cref{eq_minOutProp} could give new insights on the tensorization question of when the minimal output entropy of tensor product channels becomes additive. That is, for which~$\cE, \cF \in \TPCP(A,B)$ do we have
\begin{align}\label{eq:minoutadd}
H_{\min}(\cE\otimes\cF)\stackrel{?}{=} H_{\min}(\cE) + H_{\min}(\cF)\,.
\end{align}
We note that probabilistic counterexamples are known~\cite{hastings09}, which shows that the tensorization question \cref{eq:tensor} is in general answered in the negative.

\begin{proof}[Proof of \cref{cor_minOutput}]
We give two proofs of \cref{eq_minOutProp}, one based on the variational characterization of the relative entropy from \cref{eq_varRelEnt}, and the other based on the dual formulation from \cref{eq_minOutDual}.
Using the former approach, we see that
\begin{align*}
H_{\min}(\cE)
=\min_{\rho \in \St(A)} H\big( \cE(\rho) \big)
&=\min_{\rho \in \St(A)} -D\big(\cE(\rho)\|\id\big)\\
&=\min_{\rho \in \St(A)} - \left( \max_{\omega\in\Pos(B)} \tr \cE(\rho) \log \omega - \log \tr \omega \right) \\
&=\min_{\rho \in \St(A), \omega \in \St(B)} - \tr \rho \, \cE^{\dagger}(\log \omega) \\
&=- \max_{\rho \in \St(A), \omega \in \St(B)} \tr \rho \, \cE^{\dagger}(\log \omega) \\
&= -  \max_{\omega \in \St(B)} \lambda_{\max}\big( \cE^\dagger(\log \omega) \big) \, ,
\end{align*}
where the final step follows from the variational formula of the largest eigenvalue.

Alternatively we can verify \cref{eq_minOutProp} in the analytic picture.
To see this, we note that using the equivalence between \cref{eq_minOut} and \cref{eq_minOutDual} as well as the monotonicity of the logarithm,
\begin{align}\label{eq:H_min variational}
H_{\min}(\cE)
&=-\max_{\omega_1 \in \St(A), \,\omega_2 \in \St(B)} \log \tr \exp\bigl(\log \omega_1 + \cE^\dagger(\log \omega_2)\bigr)\,.
\end{align}
Next, note that, for any Hermitian~$H$, the Golden-Thompson inequality gives
\begin{align*}
  \max_{\omega_1 \in \St(A)} \tr \exp\bigl(\omega_1 + H\bigr)
\leq \max_{\omega_1 \in \St(A)} \tr \omega_1 \exp(H)
= \lambda_{\max}(\exp(H)) = \exp(\lambda_{\max}(H)),
\end{align*}
where the second step uses again the variational formula for the largest eigenvalue.
This inequality is in fact an equality, since the upper-bound is attained if we choose~$\omega_1$ to be a projector onto an eigenvector of~$H$ with largest eigenvalue (any such~$\omega_1$ commutes with $H$).
If we use this to evaluate \cref{eq:H_min variational}, then we obtain the desired result.
\end{proof}

\begin{example}[Qubit depolarizing channel]
The minimal output entropy of the qubit depolarizing channel
\begin{align}\label{eq:qubit depol}
\cE_p \colon X \mapsto (1-p) X + p \frac{\id_{\C^2}}{2}  \tr X\quad \text{for $p \in [0,1]$}
\end{align}
is given by $H_{\min}(\cE_p) = h\big(p/2\big)$ with $h(x):=-x\log x - (1-x) \log (1-x)$ is the \emph{binary entropy function}.
In the entropic picture, this follows as the concavity of the entropy ensures that the optimizer in \cref{eq_minOutput} can always be taken to be a pure state; the unitary covariance property of the depolarizing channel then implies that we only need to evaluate the output entropy for a single arbitrary pure state. In the analytic picture, we can use \cref{eq_minOutProp} to see that
\begin{align*}
H_{\min}(\cE_p)
= - \max_{\omega \in \St(B)} \lambda_{\max}(\cE^{\dagger}( \log \omega))
= - \max_{t \in [0,1]} \Big\{ \big(1-\frac{p}{2}\big) \log t + \frac{p}{2} \log(1-t)\Big\}
= h\Big(\frac{p}{2} \Big) \, ,
\end{align*}
where the second step follows from unitary covariance and the final step uses that $t^\star = 1-p/2$ is the optimizer.
\end{example}

\subsection{Data-processing inequality}\label{sec:data-processing}
The examples given so far employed \cref{cor_BL}, but in this section we give an example that demonstrates \cref{thm_BLrelEnt} in its full strength (with $\sigma$, $\sigma_k\neq\id$).
The \emph{data-processing inequality} (DPI) for the quantum relative entropy is a cornerstone in quantum information theory~\cite{lindblad75,uhlmann77,hermes15}.
It states that, for $\rho\in\St(A)$ and $\sigma \in \Pos(A)$, the quantum relative entropy cannot increase when applying a channel~$\cE \in \TPP(A,B)$ to both arguments, i.e.,
\begin{align*}
 D\big(\cE(\rho) \| \cE(\sigma)\big)  \leq D(\rho \| \sigma) \,.
\end{align*}
The DPI is mathematically equivalent to many other fundamental results, including the strong sub-additivity of quantum entropy~\cite{LieRus73_1,LieRus73}.
Our Brascamp-Lieb duality framework fits the DPI.
That is, \cref{thm_BLrelEnt} applied for $n=1$, $q_1=1$, $\sigma_1=\cE(\sigma)$, and $C=0$ implies the following duality.

\begin{corollary}[DPI duality]\label{cor_DPI}
For $\sigma \in \Pos(A)$ and $\cE \in \TPP(A,B)$ the following inequalities hold and are equivalent:
\begin{align}
\nonumber
D\big(\cE(\rho) \| \cE(\sigma) \big) &\leq D(\rho \| \sigma)  \quad \forall  \rho \in \St(A) \, , \\
\tr \exp\big(\log \sigma + \cE^\dagger(\log \omega) \big) &\leq \tr \exp\big(\log \omega + \log \cE(\sigma)\big) \quad \forall  \omega \in \Pos(B) \, . \label{eq_BLtoDPI}
\end{align}
\end{corollary}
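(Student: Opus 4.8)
The plan is to read off both assertions as a one-line specialization of \cref{thm_BLrelEnt}. Concretely, I would set $n=1$, $q_1=1$, $\cE_1=\cE$, $\sigma_1=\cE(\sigma)$, and take the constant $C=0$. With these choices the entropic form \cref{eq_resEntropy} becomes $D\big(\cE(\rho)\|\cE(\sigma)\big)\leq D(\rho\|\sigma)$ for all $\rho\in\St(A)$, i.e.\ exactly the first displayed inequality of the corollary.

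I would then check that the analytic form \cref{eq_resBL} collapses to \cref{eq_BLtoDPI}. Its left-hand side is already $\tr\exp(\log\sigma+\cE^\dagger(\log\omega))$. On the right-hand side we have $\exp(C)=1$ and Schatten exponent $1/q_1=1$; moreover $\log\omega+\log\cE(\sigma)$ is Hermitian, so $\exp(\log\omega+\log\cE(\sigma))$ is positive and its Schatten $1$-norm coincides with its trace, $\norm{\exp(\log\omega+\log\cE(\sigma))}_1=\tr\exp(\log\omega+\log\cE(\sigma))$. Hence the right-hand side is precisely that of \cref{eq_BLtoDPI}, and \cref{thm_BLrelEnt} delivers the claimed equivalence of the two inequalities.

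It remains to explain why the inequalities genuinely hold rather than merely being equivalent. For this I would invoke the data-processing inequality for the quantum relative entropy\,---\,the cornerstone result recalled just above\,---\,as an external input: the entropic form is precisely this theorem, so it holds, and the equivalence just established transports its validity to the analytic form \cref{eq_BLtoDPI}.

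The only delicate point, and the main obstacle, is verifying the hypothesis $\sigma_1=\cE(\sigma)\in\Pos(B)$ required by \cref{thm_BLrelEnt}, since a positive trace-preserving $\cE$ need not map a positive-definite $\sigma$ to a positive-definite image. When $\cE(\sigma)\succ0$ the application is immediate. Otherwise I would fall back on \cref{rmk:support}: since $\sigma\succ0$ we have $\rho\ll\sigma$ for every $\rho$, and from $\sigma\succeq\eps\id$ together with positivity of $\cE$ one gets $\cE(\rho)\preceq c\,\cE(\sigma)$ for a suitable $c>0$, so that $\cE(\rho)\ll\cE(\sigma)$ and condition~(1) of the remark holds; condition~(2) is automatic because $\sigma$ has full support. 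The duality then applies with $\omega$ restricted to $\omega\ll\cE(\sigma)$, and the argument goes through unchanged.
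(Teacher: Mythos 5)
Your proposal matches the paper's own argument, which likewise obtains the corollary as the specialization of \cref{thm_BLrelEnt} to $n=1$, $q_1=1$, $\sigma_1=\cE(\sigma)$, $C=0$, with validity supplied by the data-processing inequality itself. Your additional care about the case where $\cE(\sigma)$ fails to be strictly positive (handled via \cref{rmk:support}) is a point the paper glosses over, and your argument for it is sound.
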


As a simple example for $\tr\sigma\leq\tr\rho=1$, one can immediately see that $D(\rho \| \sigma)\geq0$ by considering the trace map $\cE(\cdot)=\tr(\cdot)$. Namely, data processing for the trace map takes the trivial analytic form $\tr\log\omega\leq0$ for quantum states $\omega\in\St(A)$.

Given that the DPI is quite powerful, we suspect that \cref{eq_BLtoDPI} may be of interest too. We note that \cref{eq_BLtoDPI} does not immediately follow from existing results and thus seems novel. For example, employing the operator concavity of the logarithm, the operator Jensen inequality, and the Golden-Thompson inequality we get
\begin{align}\label{eq_Mario}
\tr \exp\big(\log \sigma + \cE^\dagger(\log \omega) \big)\leq \tr \exp \big( \log \sigma + \log\cE^{\dagger}(\omega) \big)\leq \tr \cE^{\dagger}(\omega) \sigma=\tr \omega \cE(\sigma)\,.
\end{align}
This immediately implies Hansen's multivariate Golden-Thompson inequality~\cite[Inequality~(1)]{hansen2015}, but is in general still weaker than \cref{eq_BLtoDPI} as the Golden-Thompson inequality applied to the right-hand side of \cref{eq_BLtoDPI} likewise gives
\begin{align}
\tr \exp\big(\log \omega + \log \cE(\sigma)\big)\leq\tr \omega \cE(\sigma)\,.
\end{align}
Only when $\sigma = \id$ and $\cE$ is unital does \cref{eq_BLtoDPI} simplify to $\tr \exp( \cE^\dagger(\log \omega)) \leq \tr \omega$, reducing to \cref{eq_Mario}.%
\footnote{Alternatively, this also follows directly via Jensen's trace inequality~\cite{hansen03_2}.}

\subsection{Strong data-processing inequalities}
It is a natural to study potential strengthenings of the DPI inequality and a priori it is possible to seek for additive or multiplicative improvements.
Additive strengthenings of the DPI have recently generated interest in quantum information theory~\cite{FR14,JRSWW15,Sutter2017,Sutter_book}.
Here, we consider multiplicative improvements of the DPI, which have been called \emph{strong data-processing inequalities} in the literature.
To this end, define the \emph{contraction coefficient} of~$\cE \in \TPCP(A,B)$ at $\sigma\in \St(A)$ as
\begin{align}
\eta(\sigma,\cE):= \sup_{\St(A) \ni \rho \ne \sigma} \frac{D\big(\cE(\rho) \| \cE(\sigma) \big)}{D(\rho \|\sigma)}\,.
\end{align}
The data-processing inequality then ensures that $\eta(\sigma,\cE) \leq 1$, and we say that $\cE$ satisfies a strong data-processing inequality at~$\sigma$ if~$\eta(\sigma,\cE)<1$.
\Cref{thm_BLrelEnt} for $n=1$, $C=0$, $\sigma_1=\cE(\sigma)$, and~$q_1=\eta(\sigma,\cE)^{-1}$ implies the following equivalence.

\begin{corollary}[Strong DPI duality]\label{cor_SDPI}
For $\cE \in \TPP(A,B)$, $\sigma \in \Pos(A)$, and $\eta>0$, the following two statements are equivalent:
\begin{align}
D\big( \cE(\rho) \| \cE(\sigma) \big) &\leq \eta D( \rho \| \sigma) \quad \forall \rho \in \St(A) \, , \label{eq_SDPI} \\
\tr  \exp \bigl(\log \sigma + \cE^{\dagger}(\log \omega ) \big) &\leq \norm{\exp \Bigl(\log \omega + \frac{1}{\eta} \log \cE(\sigma) \Bigr)}_{\eta } \quad \forall  \omega \in \St(B) \, .\label{eq_BL_SDPI}
\end{align}
\end{corollary}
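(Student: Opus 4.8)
The plan is to obtain \cref{cor_SDPI} as a direct specialization of the quantum Brascamp-Lieb duality in \cref{thm_BLrelEnt}, with essentially no analytic work beyond a careful parameter choice and a check of the support hypotheses. Concretely, I would instantiate \cref{thm_BLrelEnt} with $n=1$, $\cE_1=\cE$, $\sigma_1=\cE(\sigma)$, $C=0$, and $q_1=1/\eta$, and then read off that the two equivalent statements \cref{eq_resEntropy} and \cref{eq_resBL} become precisely \cref{eq_SDPI} and \cref{eq_BL_SDPI}. Since \cref{thm_BLrelEnt} already asserts the equivalence of its two forms, this transfers the equivalence to the corollary once the translation of both sides is verified.

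First I would verify the entropic side: for $n=1$ and $C=0$, \cref{eq_resEntropy} reads $\frac{1}{\eta}D\big(\cE(\rho)\|\cE(\sigma)\big)\leq D(\rho\|\sigma)$, and multiplying through by $\eta>0$ yields exactly \cref{eq_SDPI}. Next I would verify the analytic side: for $n=1$, $C=0$, and $q_1=1/\eta$, the product on the right-hand side of \cref{eq_resBL} collapses to a single factor, and using $1/q_1=\eta$ together with $q_1\log\sigma_1=\frac{1}{\eta}\log\cE(\sigma)$, the right-hand side becomes $\norm{\exp\big(\log\omega+\frac{1}{\eta}\log\cE(\sigma)\big)}_{\eta}$, while the left-hand side is $\tr\exp\big(\log\sigma+\cE^\dagger(\log\omega)\big)$. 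This is exactly \cref{eq_BL_SDPI} with $\omega=\omega_1$. The ranges of the quantifiers match up to the ``moreover'' clause of \cref{thm_BLrelEnt}, which reduces the analytic inequality to full-support states and hence to all of $\St(B)$ by the usual limiting argument.

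The one point requiring care — and the place where a literal application of \cref{thm_BLrelEnt} is not quite legitimate — is the support hypothesis. The theorem demands $\sigma_1\in\Pos(B)$ (full support), whereas here $\sigma_1=\cE(\sigma)$ is only guaranteed to lie in $\Posemi(B)$: a trace-preserving positive map need not send the full-support operator $\sigma$ to a full-support operator. I therefore expect this to be the main (indeed, only) obstacle, and I would resolve it by invoking the extension in \cref{rmk:support}, which permits $\sigma_1\in\Posemi(B)$ provided its two support conditions hold. Both are immediate here precisely because $\sigma\in\Pos(A)$ has full support: condition~(2) is automatic, since every operator satisfies $\,\cdot\,\ll\sigma$ when $\sigma$ is of full support; and for condition~(1), given any $\rho\in\St(A)$ we have $\rho\leq\lambda\sigma$ for some $\lambda>0$ by invertibility of $\sigma$, so positivity of $\cE$ gives $\cE(\rho)\leq\lambda\,\cE(\sigma)$ and hence $\cE(\rho)\ll\cE(\sigma)=\sigma_1$. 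Note also that ``all $\rho\in\St(A)$ with $\rho\ll\sigma$'' is simply all of $\St(A)$, matching the corollary. With both conditions verified, \cref{rmk:support} applies and the equivalence of \cref{eq_SDPI} and \cref{eq_BL_SDPI} follows. Since the whole argument is a substitution into an already-established equivalence, there is no genuine analytic difficulty; the only thing to remain vigilant about is the possible degeneracy of $\cE(\sigma)$, which the support remark is designed to absorb.
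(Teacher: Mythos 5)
Your proposal matches the paper's own derivation: \cref{cor_SDPI} is obtained there exactly as you describe, by specializing \cref{thm_BLrelEnt} to $n=1$, $C=0$, $\sigma_1=\cE(\sigma)$, and $q_1=1/\eta$. Your additional observation that $\cE(\sigma)$ may fail to have full support, and that \cref{rmk:support} is needed to cover this case, is a correct and worthwhile refinement that the paper leaves implicit.
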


\noindent
Thus, to determine $\eta(\sigma,\cE)$, we aim to find the smallest constant $\eta \in [0,1]$ such that \cref{eq_SDPI} or, equivalently, \cref{eq_BL_SDPI} holds.
For unital $\cE$ and maximally mixed $\sigma=\id/d$, $d:=\dim(A)$, the duality in~\cref{cor_SDPI} simplifies to
\begin{align}\label{eq_inMid}
 \log d - H\big( \cE(\rho)\big)  &\leq  \eta  \big( \log d - H(\rho) \big) \quad \forall \rho \in \St(A) \, , \\
 \label{eq_inMidfunc}
\tr \exp\bigl( \cE^\dagger(\log \omega) \bigr) &\leq d^{\frac{\eta  -1}{\eta }} \norm{\omega}_{\eta } \quad \forall \omega \in \St(B) \, .
\end{align}
Often we are also interested in the global \emph{contraction coefficient} of~$\cE$, obtained by optimizing $\eta(\sigma, \cE) $ over all $\sigma \in \St(A)$, i.e.,
\begin{align}
\eta(\cE):= \sup_{\sigma \in \St(A)} \eta(\sigma, \cE) \, .
\end{align}

\begin{example}[Qubit depolarizing channel]
For the qubit depolarizing channel $\cE_p$ from \cref{eq:qubit depol}, which is unital, we claim that
\begin{align}\label{eq:contraction-dep}
\eta\left(\frac{\id_{\C^2}}{2},\cE_p\right)=(1-p)^2\,.
\end{align}
To prove this in the entropic picture we start by recalling that $\eta(\cE_p)=(1-p)^2$~\cite{hiai16}, which already gives $\eta(\frac{\id_{\C^2}}{2},\cE_p)\leq(1-p)^2$.
Thus, it suffices to find states $\rho \in \St(A)$ such that
\begin{align}
(1-p)^2 \leq \frac{D(\cE_p(\rho) \| \frac{\id_{\C^2}}{2})}{D(\rho \| \frac{\id_{\C^2}}{2})}
=\frac{1 - H\big( \cE_p(\rho)\big)}{1 - H(\rho)}
\end{align}
up to arbitrarily small error.
The states $\rho_{\eps} = \diag(\frac{1}{2}+\eps, \frac{1}{2}-\eps)$ satisfy this condition in the limit $\eps \to 0$.
Indeed,
\begin{align}
\lim_{\eps \to 0} \frac{1 - H\big( \cE_p(\rho_{\eps})\big)}{1 - H(\rho_{\eps})}
= \lim_{\eps \to 0} \frac{1-h\big((1-p) (1/2 + \eps) + p/2 \big)}{1-h(1/2 + \eps)}
= (1-p)^2 \, ,
\end{align}
as follows from the Taylor expansion of the binary entropy function~$h(\cdot)$.

In the analytic form of \cref{eq_inMidfunc}, the statement of \cref{eq:contraction-dep} is equivalent to the claim that $\eta=(1-p)^2$ is the smallest $\eta \in [0,1]$ such that
\begin{align}
\tr \exp \big((1-p) \log \omega + \frac{p}{2} \id_{\C^2 }\tr \log \omega \big)
\leq 2^{\frac{\eta -1}{\eta}} \norm{\omega}_{\eta } \quad \text{for all} \quad \omega \in \St(B) \, .
\end{align}
Without loss of generality we can assume that $\omega = \diag(t,1-t)$ for $t \in [0,1]$. Then, the statement above simplifies to showing that $\eta=(1-p)^2$ is the smallest $\eta \in [0,1]$ such that
\begin{align}
\big(t (1-t) \big)^{\frac{p}{2}} \big(t^{1-p} +(1-t)^{1-p} \big) \leq 2^{\frac{\eta -1}{\eta}} \big(t^\eta +(1-t)^{\eta} \big)^{\frac{1}{\eta}} \quad \text{for all} \quad t \in [0,1] \, .
\end{align}
\end{example}

\subsection{Super-additivity of relative entropy}
Another type of strengthening of the DPI is as follows.
The quantum relative entropy is \emph{super-additive} for product states in the second argument. That is, for $\rho_{AB}$, $\sigma_{AB} \in \St(A \otimes B)$ we have
\begin{align}
  D(\rho_{AB} \| \sigma_A \otimes \sigma_B) \geq D(\rho_A \| \sigma_A) + D(\rho_B \| \sigma_B) \, .
\end{align}
This directly follows from the non-negativity of the relative entropy, since $D(\rho_{AB} \| \sigma_A \otimes \sigma_B)  - D(\rho_A \| \sigma_A) - D(\rho_B \| \sigma_B) = D(\rho_{AB} \| \rho_A \otimes \rho_B) \geq 0$. If the state in the second argument is not a product state we can apply the DPI twice and find
\begin{align}
D(\rho_{AB} \| \sigma_{AB}) \geq t D(\rho_A \| \sigma_A) + (1-t) D(\rho_B \| \sigma_B) \quad \text{for all } t \in[0,1] \, .
\end{align}
A natural question is thus to find parameters $\alpha(\sigma_{AB}), \beta(\sigma_{AB})$ with $\alpha(\sigma_A \otimes \sigma_B) = \beta(\sigma_A \otimes \sigma_B) =1$ such that\footnote{We might also ask for $\alpha(\sigma_{AB}) + \beta(\sigma_{AB}) \geq 1$.}
\begin{align} \label{eq_goalEq}
D(\rho_{AB} \| \sigma_{AB}) \geq \alpha(\sigma_{AB}) D(\rho_A \| \sigma_A) + \beta(\sigma_{AB}) D(\rho_B \| \sigma_B) \, .
\end{align}
Recently, it was shown~\cite{capel18} that \cref{eq_goalEq} indeed holds for
\begin{align}\label{eq_Perez}
\alpha(\sigma_{AB}) = \beta(\sigma_{AB}) =  \Big( 1 + 2 \Big\|\sigma_A^{-\frac{1}{2}}\otimes \sigma_B^{-\frac{1}{2}} \sigma_{AB} \sigma_A^{-\frac{1}{2}}\otimes \sigma_B^{-\frac{1}{2}}- \id_{AB}\Big\|_\infty \Big)^{-1}\,.
\end{align}
Applying \cref{thm_BLrelEnt} for $n=2$, $\sigma_1=\sigma_A$, $\sigma_2=\sigma_B$, $C=0$, $\cE_1=\tr_B$, $\cE_2 = \tr_A$, $q_1= \alpha $, and $q_2 = \beta$ gives the following BL duality.

\begin{corollary}[Duality for super-additivity of relative entropy]
For $\sigma_{AB} \in \Pos(A \otimes B)$ with $\tr\sigma_{AB}=1$, $\alpha>0$, and $\beta>0$, the following two statements are equivalent:
\begin{align}
\alpha D(\rho_A \| \sigma_A) + \beta D(\rho_B \| \sigma_B) &\leq  D(\rho_{AB} \| \sigma_{AB})  \quad \forall  \rho_{AB} \in \St(A \otimes B) \label{eq_superAddIT} \,,  \\
\tr \exp( \log \sigma_{AB} + \log \omega_A + \log\omega_B )
\!&\leq\! \norm{\exp (\log \omega_A +\alpha \log \sigma_A)}_{\frac{1}{\alpha}} \norm{\exp(\log \omega_B + \beta \log \sigma_B )}_{\frac{1}{\beta}} \nonumber \\
& \hspace{40mm} \forall \omega_A \in \St(A),\omega_B \in \St(B) \, . \label{eq_superAddFunc}
\end{align}
\end{corollary}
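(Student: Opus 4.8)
The plan is to derive this corollary as a direct specialization of the main duality, \cref{thm_BLrelEnt}. I would invoke the theorem with the single input system of \cref{thm_BLrelEnt} taken to be the composite $A\otimes B$, with $n=2$ output systems $B_1=A$ and $B_2=B$, channels $\cE_1=\tr_B$ and $\cE_2=\tr_A$, reference operators $\sigma=\sigma_{AB}$, $\sigma_1=\sigma_A$, $\sigma_2=\sigma_B$, weights $q_1=\alpha$, $q_2=\beta$, and $C=0$. The claimed equivalence between \cref{eq_superAddIT} and \cref{eq_superAddFunc} should then be nothing but the equivalence between \cref{eq_resEntropy} and \cref{eq_resBL} for these choices, and the task reduces to checking that the two specializations match term by term.

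First I would confirm the entropic side. Since $\cE_1(\rho_{AB})=\rho_A$ and $\cE_2(\rho_{AB})=\rho_B$, the left-hand side $\sum_{k=1}^2 q_k\, D(\cE_k(\rho)\|\sigma_k)$ of \cref{eq_resEntropy} becomes $\alpha\, D(\rho_A\|\sigma_A)+\beta\, D(\rho_B\|\sigma_B)$, while the right-hand side $D(\rho\|\sigma)+C$ collapses to $D(\rho_{AB}\|\sigma_{AB})$, which is exactly \cref{eq_superAddIT}. The one computation that warrants care is the analytic side, where I need the adjoints of the partial-trace channels. Using the defining relation $\tr\cE(X)^\dagger Y=\tr X^\dagger\cE^\dagger(Y)$, a one-line calculation identifies $(\tr_B)^\dagger$ with the ampliation $Y\mapsto Y\otimes\id_B$ and $(\tr_A)^\dagger$ with $Y\mapsto\id_A\otimes Y$. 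Thus $\cE_1^\dagger(\log\omega_A)=\log\omega_A\otimes\id_B$ and $\cE_2^\dagger(\log\omega_B)=\id_A\otimes\log\omega_B$, which under our convention of suppressing identity factors read as $\log\omega_A$ and $\log\omega_B$; substituting into \cref{eq_resBL} turns its left-hand side into $\tr\exp(\log\sigma_{AB}+\log\omega_A+\log\omega_B)$ and, with $C=0$, its right-hand side into $\norm{\exp(\log\omega_A+\alpha\log\sigma_A)}_{1/\alpha}\,\norm{\exp(\log\omega_B+\beta\log\sigma_B)}_{1/\beta}$, matching \cref{eq_superAddFunc}.

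I do not expect a genuine obstacle: the statement is a corollary in the literal sense, inheriting its equivalence verbatim from \cref{thm_BLrelEnt}. The only substantive point is the correct bookkeeping of subsystems\,---\,namely that $\omega_A,\sigma_A$ live on $A$ while $\omega_B,\sigma_B$ live on $B$\,---\,together with the standard fact that the adjoint of a partial trace is the corresponding ampliation. Once these identifications are made explicit, both the entropic and the analytic forms follow by inspection, and no further estimates are needed.
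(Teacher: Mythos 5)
Your proposal is correct and is exactly the paper's own argument: the corollary is obtained by specializing \cref{thm_BLrelEnt} to $n=2$, $\cE_1=\tr_B$, $\cE_2=\tr_A$, $\sigma=\sigma_{AB}$, $\sigma_1=\sigma_A$, $\sigma_2=\sigma_B$, $q_1=\alpha$, $q_2=\beta$, $C=0$, with the adjoint of the partial trace being the ampliation. Your bookkeeping of the entropic and analytic sides matches the paper's term by term.
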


\noindent
We leave it as an open question to find parameters $\alpha(\sigma_{AB})$ and $\beta(\sigma_{AB})$ different from \cref{eq_Perez}, satisfying \cref{eq_superAddIT} or equivalently \cref{eq_superAddFunc}.

\section{Conclusion}\label{sec:conclusion}
Our fully quantum Brascamp-Lieb dualities raise a plethora of possible extensions to study.
Taking inspiration from the commutative case~\cite{Verdu17}, this could include, e.g., Gaussian optimality questions, hypercontractivity inequalities, transportation cost inequalities, strong converses in Shannon theory, entropy power inequalities~\cite{Nair19}, or algorithmic and complexity-theoretic questions~\cite{garg2018algorithmic,burgisser2018efficient,burgisser2019towards}. For some of these applications it seems that an extension of Barthe's reverse Brascamp-Lieb duality~\cite{Barthe1998} to the non-commutative setting would be useful.

\medskip\textbf{Acknowledgements.}\phantomsection\addcontentsline{toc}{section}{Acknowledgements}
MB and DS thank the Stanford Institute for Theoretical Physics for their hospitality during the time this project was initiated.
MW would like to thank Graeme Smith and JILA for their hospitality.
MW gratefully acknowledges Misha Gromov for his hospitality at IHES and for suggesting the problem discussed in \cref{sec:Gaussian}.
We thank Eric Carlen for corresponding with us about Brascamp-Lieb inequalities for relative entropies.
We thank Ernest Tan for informing us about an error in Section~\ref{subsec:entropic} in a previous version of this manuscript.
DS acknowledges support from the Swiss National Science Foundation via the NCCR QSIT as well as project No.~200020\_165843.
MB acknowledges funding by the European Research Council (ERC Grant Agreement No.~948139).
MW acknowledges support by the NWO through Veni grant no.~680-47-459 and grant OCENW.KLEIN.267, by the Deutsche Forschungsgemeinschaft (DFG, German Research Foundation) under Germany's Excellence Strategy - EXC\ 2092\ CASA - 390781972, by the BMBF through project Quantum Methods and Benchmarks for Resource Allocation (QuBRA), and by the European Research Council~(ERC) through ERC Starting Grant 101040907-SYMOPTIC.

\bibliographystyle{arxiv_no_month}
\bibliography{bibliofile}

\begin{thebibliography}{10}

\bibitem{Nair19}
V.~Anantharam, V.~Jog, and C.~Nair.
\newblock Unifying the {B}rascamp-{L}ieb inequality and the entropy power
  inequality.
\newblock {\em IEEE International Symposium on Information Theory (ISIT)},
  pages 1847--1851, 2019.
\newblock
  \texttt{\href{http://dx.doi.org/10.1109/ISIT.2019.8849711}{DOI:\,10.1109/ISIT.2019.8849711}}.
\newblock Extended version available at
  \href{https://arxiv.org/abs/1901.06619}{arXiv:1901.06619}.

\bibitem{Ball1989}
K.~Ball.
\newblock {\em Volumes of sections of cubes and related problems}, pages
  251--260.
\newblock Springer Berlin Heidelberg, Berlin, Heidelberg, 1989.
\newblock
  \texttt{\href{http://dx.doi.org/10.1007/BFb0090058}{DOI:\,10.1007/BFb0090058}}.

\bibitem{ball1991shadows}
K.~Ball.
\newblock Shadows of convex bodies.
\newblock {\em Transactions of the American Mathematical Society},
  327(2):891--901, 1991.
\newblock
  \texttt{\href{http://dx.doi.org/10.1090/S0002-9947-1991-1035998-3}{DOI:\,10.1090/S0002-9947-1991-1035998-3}}.

\bibitem{ball_1991_ratios}
K.~Ball.
\newblock Volume ratios and a reverse isoperimetric inequality.
\newblock {\em Journal of the London Mathematical Society}, s2-44(2):351--359,
  1991.
\newblock
  \texttt{\href{http://dx.doi.org/10.1112/jlms/s2-44.2.351}{DOI:\,10.1112/jlms/s2-44.2.351}}.

\bibitem{ball_1997}
K.~Ball.
\newblock An elementary introduction to modern convex geometry.
\newblock {\em Flavors of geometry}, 31:1--58, 1997.
\newblock Available online:
  \url{http://library.msri.org/books/Book31/files/ball.pdf}.

\bibitem{ball2001convex}
K.~Ball.
\newblock {\em Convex geometry and functional analysis}, volume~1, chapter~4,
  pages 161--194.
\newblock Elsevier, 2001.
\newblock Available online:
  \url{https://www.ceremade.dauphine.fr/~lehec/gdl/handbook.pdf}.

\bibitem{Barthe1998}
F.~Barthe.
\newblock On a reverse form of the {B}rascamp-{L}ieb inequality.
\newblock {\em Inventiones mathematicae}, 134(2):335--361, 1998.
\newblock
  \texttt{\href{http://dx.doi.org/10.1007/s002220050267}{DOI:\,10.1007/s002220050267}}.

\bibitem{Barthe2004}
F.~Barthe and D.~Cordero-Erausquin.
\newblock {\em Inverse Brascamp-Lieb Inequalities along the Heat Equation},
  pages 65--71.
\newblock Springer Berlin Heidelberg, Berlin, Heidelberg, 2004.

\bibitem{beigi07}
S.~Beigi and P.~W. Shor.
\newblock On the complexity of computing zero-error and {H}olevo capacity of
  quantum channels, 2007.
\newblock \href{https://arxiv.org/abs/0709.2090}{arXiv:0709.2090}.

\bibitem{Bennett2008}
J.~Bennett, A.~Carbery, M.~Christ, and T.~Tao.
\newblock The {B}rascamp--{L}ieb inequalities: Finiteness, structure and
  extremals.
\newblock {\em Geometric and Functional Analysis}, 17(5):1343--1415, 2008.
\newblock
  \texttt{\href{http://dx.doi.org/10.1007/s00039-007-0619-6}{DOI:\,10.1007/s00039-007-0619-6}}.

\bibitem{berta_10}
M.~Berta, M.~Christandl, R.~Colbeck, J.~M. Renes, and R.~Renner.
\newblock The uncertainty principle in the presence of quantum memory.
\newblock {\em Nature Physics}, 6, 2010.
\newblock
  \texttt{\href{http://dx.doi.org/10.1038/nphys1734}{DOI:\,10.1038/nphys1734}}.

\bibitem{berta16}
M.~Berta, F.~Furrer, and V.~B. Scholz.
\newblock The smooth entropy formalism for von {N}eumann algebras.
\newblock {\em Journal of Mathematical Physics}, 57:015213, 2016.
\newblock
  \texttt{\href{http://dx.doi.org/10.1063/1.4936405}{DOI:\,10.1063/1.4936405}}.

\bibitem{hiai_11}
J.-C. Bourin and F.~Hiai.
\newblock Norm and anti-norm inequalities for positive semi-definite matrices.
\newblock {\em International Journal of Mathematics}, 22(08):1121--1138, 2011.
\newblock
  \texttt{\href{http://dx.doi.org/10.1142/S0129167X1100715X}{DOI:\,10.1142/S0129167X1100715X}}.

\bibitem{Brascamp76}
H.~J. Brascamp and E.~H. Lieb.
\newblock Best constants in {Y}oung's inequality, its converse, and its
  generalization to more than three functions.
\newblock {\em Advances in Mathematics}, 20(2):151--173, 1976.
\newblock
  \texttt{\href{http://dx.doi.org/10.1016/0001-8708(76)90184-5}{DOI:\,10.1016/0001-8708(76)90184-5}}.

\bibitem{burgisser2018efficient}
P.~B{\"u}rgisser, C.~Franks, A.~Garg, R.~Oliveira, M.~Walter, and A.~Wigderson.
\newblock Efficient algorithms for tensor scaling, quantum marginals, and
  moment polytopes.
\newblock In {\em 2018 IEEE 59th Annual Symposium on Foundations of Computer
  Science (FOCS)}, pages 883--897. IEEE, 2018.
\newblock
  \texttt{\href{http://dx.doi.org/10.1109/FOCS.2018.00088}{DOI:\,10.1109/FOCS.2018.00088}}.

\bibitem{burgisser2019towards}
P.~B{\"u}rgisser, C.~Franks, A.~Garg, R.~Oliveira, M.~Walter, and A.~Wigderson.
\newblock Towards a theory of non-commutative optimization: geodesic 1st and
  2nd order methods for moment maps and polytopes.
\newblock In {\em 2019 IEEE 60th Annual Symposium on Foundations of Computer
  Science (FOCS)}, pages 845--861. IEEE, 2019.
\newblock
  \texttt{\href{http://dx.doi.org/10.1109/FOCS.2019.00055}{DOI:\,10.1109/FOCS.2019.00055}}.

\bibitem{capel18}
A.~Capel.
\newblock Superadditivity of quantum relative entropy for general states.
\newblock {\em IEEE Transactions on Information Theory}, 64(7):4758--4765,
  2018.
\newblock
  \texttt{\href{http://dx.doi.org/10.1109/TIT.2017.2772800}{DOI:\,10.1109/TIT.2017.2772800}}.

\bibitem{carlen_book}
E.~Carlen.
\newblock {\em Trace Inequalities and Quantum Entropy: An Introductory Course}.
\newblock Contemporary Mathematics, 2009.
\newblock
  \texttt{\href{http://dx.doi.org/10.1090/conm/529}{DOI:\,10.1090/conm/529}}.

\bibitem{Carlen09}
E.~A. Carlen and D.~Cordero-Erausquin.
\newblock Subadditivity of the entropy and its relation to {B}rascamp--{L}ieb
  type inequalities.
\newblock {\em Geometric and Functional Analysis}, 19(2):373--405, 2009.
\newblock
  \texttt{\href{http://dx.doi.org/10.1007/s00039-009-0001-y}{DOI:\,10.1007/s00039-009-0001-y}}.

\bibitem{Lieb08}
E.~A. Carlen and E.~H. Lieb.
\newblock {B}rascamp-{L}ieb inequalities for non-commutative integration.
\newblock {\em Documenta Mathematica}, 13:553--584, 2008.
\newblock Available online:
  \url{https://www.math.uni-bielefeld.de/documenta/vol-13/16.pdf}.

\bibitem{Carlen04}
E.~A. Carlen, E.~H. Lieb, and M.~Loss.
\newblock A sharp analog of {Y}oung's inequality on {SN} and related entropy
  inequalities.
\newblock {\em The Journal of Geometric Analysis}, 14(3):487--520, 2004.
\newblock
  \texttt{\href{http://dx.doi.org/10.1007/BF02922101}{DOI:\,10.1007/BF02922101}}.

\bibitem{carlen2014analog}
E.~A. Carlen and J.~Maas.
\newblock An analog of the 2-{W}asserstein metric in non-commutative
  probability under which the {F}ermionic {F}okker--{P}lanck equation is
  gradient flow for the entropy.
\newblock {\em Communications in Mathematical Physics}, 331(3):887--926, 2014.
\newblock
  \texttt{\href{http://dx.doi.org/10.1007/s00220-014-2124-8}{DOI:\,10.1007/s00220-014-2124-8}}.

\bibitem{carlen2017gradient}
E.~A. Carlen and J.~Maas.
\newblock Gradient flow and entropy inequalities for quantum {M}arkov
  semigroups with detailed balance.
\newblock {\em Journal of Functional Analysis}, 273(5):1810--1869, 2017.
\newblock
  \texttt{\href{http://dx.doi.org/10.1016/j.jfa.2017.05.003}{DOI:\,10.1016/j.jfa.2017.05.003}}.

\bibitem{carlen2018noncommutative}
E.~A. Carlen and J.~Maas.
\newblock Non-commutative calculus, optimal transport and functional
  inequalities in dissipative quantum systems.
\newblock {\em Journal of Statistical Physics}, 178(2):319--378, 2020.
\newblock
  \texttt{\href{http://dx.doi.org/10.1007/s10955-019-02434-w}{DOI:\,10.1007/s10955-019-02434-w}}.

\bibitem{coles17}
P.~J. Coles, M.~Berta, M.~Tomamichel, and S.~Wehner.
\newblock Entropic uncertainty relations and their applications.
\newblock {\em Rev. Mod. Phys.}, 89:015002, 2017.
\newblock
  \texttt{\href{http://dx.doi.org/10.1103/RevModPhys.89.015002}{DOI:\,10.1103/RevModPhys.89.015002}}.

\bibitem{Coles12}
P.~J. Coles, R.~Colbeck, L.~Yu, and M.~Zwolak.
\newblock Uncertainty relations from simple entropic properties.
\newblock {\em Physical Review Letters}, 108:210405, 2012.
\newblock
  \texttt{\href{http://dx.doi.org/10.1103/PhysRevLett.108.210405}{DOI:\,10.1103/PhysRevLett.108.210405}}.

\bibitem{Coles11}
P.~J. Coles, L.~Yu, V.~Gheorghiu, and R.~B. Griffiths.
\newblock Information-theoretic treatment of tripartite systems and quantum
  channels.
\newblock {\em Physical Review A}, 83:062338, 2011.
\newblock
  \texttt{\href{http://dx.doi.org/10.1103/PhysRevA.83.062338}{DOI:\,10.1103/PhysRevA.83.062338}}.

\bibitem{dePalma18}
G.~De~Palma and D.~Trevisan.
\newblock The conditional entropy power inequality for bosonic quantum systems.
\newblock {\em Communications in Mathematical Physics}, 360(2):639--662, 2018.
\newblock
  \texttt{\href{http://dx.doi.org/10.1007/s00220-017-3082-8}{DOI:\,10.1007/s00220-017-3082-8}}.

\bibitem{dePalma21}
G.~De~Palma and D.~Trevisan.
\newblock The generalized strong subadditivity of the von {N}eumann entropy for
  bosonic quantum {G}aussian systems, 2021.
\newblock \href{https://arxiv.org/abs/2105.05627}{arXiv:2105.05627}.

\bibitem{FR14}
O.~Fawzi and R.~Renner.
\newblock Quantum conditional mutual information and approximate {M}arkov
  chains.
\newblock {\em Communications in Mathematical Physics}, 340(2):575--611, 2015.
\newblock
  \texttt{\href{http://dx.doi.org/10.1007/s00220-015-2466-x}{DOI:\,10.1007/s00220-015-2466-x}}.

\bibitem{Frank2013}
R.~L. Frank and E.~H. Lieb.
\newblock Extended quantum conditional entropy and quantum uncertainty
  inequalities.
\newblock {\em Communications in Mathematical Physics}, 323(2):487--495, 2013.
\newblock
  \texttt{\href{http://dx.doi.org/10.1007/s00220-013-1775-1}{DOI:\,10.1007/s00220-013-1775-1}}.

\bibitem{Junge18}
L.~{Gao}, M.~{Junge}, and N.~{LaRacuente}.
\newblock Uncertainty principle for quantum channels.
\newblock {\em IEEE International Symposium on Information Theory (ISIT)},
  pages 996--1000, 2018.
\newblock
  \texttt{\href{http://dx.doi.org/10.1109/ISIT.2018.8437730}{DOI:\,10.1109/ISIT.2018.8437730}}.

\bibitem{garg2018algorithmic}
A.~Garg, L.~Gurvits, R.~Oliveira, and A.~Wigderson.
\newblock Algorithmic and optimization aspects of {B}rascamp-{L}ieb
  inequalities, via operator scaling.
\newblock {\em Geometric and Functional Analysis}, 28(1):100--145, 2018.
\newblock
  \texttt{\href{http://dx.doi.org/10.1007/s00039-018-0434-2}{DOI:\,10.1007/s00039-018-0434-2}}.

\bibitem{golden65}
S.~Golden.
\newblock Lower bounds for the {H}elmholtz function.
\newblock {\em Physical Review}, 137:B1127--B1128, 1965.
\newblock
  \texttt{\href{http://dx.doi.org/10.1103/PhysRev.137.B1127}{DOI:\,10.1103/PhysRev.137.B1127}}.

\bibitem{hansen2015}
F.~Hansen.
\newblock Multivariate extensions of the {G}olden-{T}hompson inequality.
\newblock {\em Annals of Functional Analysis}, 6(4):301--310, 2015.
\newblock
  \texttt{\href{http://dx.doi.org/10.15352/afa/06-4-301}{DOI:\,10.15352/afa/06-4-301}}.

\bibitem{hansen03}
F.~Hansen and G.~K. Pedersen.
\newblock {J}ensen's operator inequality.
\newblock {\em Bulletin of the London Mathematical Society}, 35(4):553--564,
  2003.
\newblock
  \texttt{\href{http://dx.doi.org/10.1112/S0024609303002200}{DOI:\,10.1112/S0024609303002200}}.

\bibitem{hansen03_2}
F.~Hansen and G.~K. Pedersen.
\newblock Jensen's trace inequality in several variables.
\newblock {\em International Journal of Mathematics}, 14(06):667--681, 2003.
\newblock
  \texttt{\href{http://dx.doi.org/10.1142/S0129167X03001983}{DOI:\,10.1142/S0129167X03001983}}.

\bibitem{hastings09}
M.~B. Hastings.
\newblock Superadditivity of communication capacity using entangled inputs.
\newblock {\em Nature Physics}, 5(4):255--257, 2009.
\newblock
  \texttt{\href{http://dx.doi.org/10.1038/nphys1224}{DOI:\,10.1038/nphys1224}}.

\bibitem{hiai16}
F.~Hiai and M.~B. Ruskai.
\newblock Contraction coefficients for noisy quantum channels.
\newblock {\em Journal of Mathematical Physics}, 57(1):015211, 2016.
\newblock
  \texttt{\href{http://dx.doi.org/10.1063/1.4936215}{DOI:\,10.1063/1.4936215}}.

\bibitem{holevo_book}
A.~S. Holevo.
\newblock {\em Quantum Systems, Channels, Information}.
\newblock De Gruyter Studies in Mathematical Physics 16, 2012.
\newblock
  \texttt{\href{http://dx.doi.org/10.1515/9783110273403}{DOI:\,10.1515/9783110273403}}.

\bibitem{huang19}
D.~Huang.
\newblock Generalizing lieb's concavity theorem via operator interpolation.
\newblock {\em Advances in Mathematics}, 369:107208, 2020.
\newblock
  \texttt{\href{http://dx.doi.org/10.1016/j.aim.2020.107208}{DOI:\,10.1016/j.aim.2020.107208}}.

\bibitem{JRSWW15}
M.~Junge, R.~Renner, D.~Sutter, M.~M. Wilde, and A.~Winter.
\newblock Universal recovery maps and approximate sufficiency of quantum
  relative entropy.
\newblock {\em Annales Henri Poincar{\'e}}, 19(10):2955--2978, 2018.
\newblock
  \texttt{\href{http://dx.doi.org/10.1007/s00023-018-0716-0}{DOI:\,10.1007/s00023-018-0716-0}}.

\bibitem{Koenig14}
R.~{K\"onig} and G.~{Smith}.
\newblock The entropy power inequality for quantum systems.
\newblock {\em IEEE Transactions on Information Theory}, 60(3):1536--1548,
  2014.
\newblock
  \texttt{\href{http://dx.doi.org/10.1109/TIT.2014.2298436}{DOI:\,10.1109/TIT.2014.2298436}}.

\bibitem{Koenig16}
R.~K{\"o}nig and G.~Smith.
\newblock Corrections to ``{T}he entropy power inequality for quantum
  systems''.
\newblock {\em IEEE Transactions on Information Theory}, 62(7):4358--4359,
  2016.
\newblock
  \texttt{\href{http://dx.doi.org/10.1109/TIT.2016.2563438}{DOI:\,10.1109/TIT.2016.2563438}}.

\bibitem{Kuznetsova11}
A.~Kuznetsova.
\newblock Conditional entropy for infinite-dimensional quantum systems.
\newblock {\em Theory of Probability \& Its Applications}, 55(4):709--717,
  2011.
\newblock
  \texttt{\href{http://dx.doi.org/10.1137/S0040585X97985121}{DOI:\,10.1137/S0040585X97985121}}.

\bibitem{Lieb73}
E.~H. Lieb.
\newblock Convex trace functions and the {W}igner-{Y}anase-{D}yson conjecture.
\newblock {\em Advances in Mathematics}, 11(3):267 -- 288, 1973.
\newblock
  \texttt{\href{http://dx.doi.org/10.1016/0001-8708(73)90011-X}{DOI:\,10.1016/0001-8708(73)90011-X}}.

\bibitem{Lieb1990}
E.~H. Lieb.
\newblock Gaussian kernels have only {G}aussian maximizers.
\newblock {\em Inventiones Mathematicae}, 102(1):179--208, 1990.
\newblock
  \texttt{\href{http://dx.doi.org/10.1007/BF01233426}{DOI:\,10.1007/BF01233426}}.

\bibitem{LieRus73_1}
E.~H. Lieb and M.~B. Ruskai.
\newblock A fundamental property of quantum-mechanical entropy.
\newblock {\em Physical Review Letters}, 30:434--436, 1973.
\newblock
  \texttt{\href{http://dx.doi.org/10.1103/PhysRevLett.30.434}{DOI:\,10.1103/PhysRevLett.30.434}}.

\bibitem{LieRus73}
E.~H. Lieb and M.~B. Ruskai.
\newblock Proof of the strong subadditivity of quantum-mechanical entropy.
\newblock {\em Journal of Mathematical Physics}, 14(12):1938--1941, 1973.
\newblock
  \texttt{\href{http://dx.doi.org/10.1063/1.1666274}{DOI:\,10.1063/1.1666274}}.

\bibitem{Ligthart20}
L.~Ligthart.
\newblock Linear quantum entropy inequalities beyond strong subadditivity and
  their applications.
\newblock {\em MSc thesis, University of Amsterdam \& Vrije Universiteit
  Amsterdam}, 2020.

\bibitem{lindblad75}
G.~Lindblad.
\newblock Completely positive maps and entropy inequalities.
\newblock {\em Communications in Mathematical Physics}, 40(2):147--151, 1975.
\newblock
  \texttt{\href{http://dx.doi.org/10.1007/BF01609396}{DOI:\,10.1007/BF01609396}}.

\bibitem{Verdu16}
J.~Liu, T.~A. Courtade, P.~Cuff, and S.~Verd{\'u}.
\newblock {B}rascamp-{L}ieb inequality and its reverse: An information
  theoretic view.
\newblock {\em IEEE International Symposium on Information Theory (ISIT)},
  pages 1048--1052, 2016.
\newblock
  \texttt{\href{http://dx.doi.org/10.1109/ISIT.2016.7541459}{DOI:\,10.1109/ISIT.2016.7541459}}.

\bibitem{Verdu17}
J.~Liu, T.~A. Courtade, P.~Cuff, and S.~Verdu.
\newblock Information-theoretic perspectives on {B}rascamp-{L}ieb inequality
  and its reverse, 2017.
\newblock \href{https://arxiv.org/abs/1702.06260}{arXiv:1702.06260}.

\bibitem{MaaUff88}
H.~Maassen and J.~B.~M. Uffink.
\newblock Generalized entropic uncertainty relations.
\newblock {\em Physical Review Letters}, 60:1103--1106, 1988.
\newblock
  \texttt{\href{http://dx.doi.org/10.1103/PhysRevLett.60.1103}{DOI:\,10.1103/PhysRevLett.60.1103}}.

\bibitem{hermes15}
A.~M\"uller-Hermes and D.~Reeb.
\newblock Monotonicity of the quantum relative entropy under positive maps.
\newblock {\em Annals of Henri Poincar\'e}, 2017.
\newblock
  \texttt{\href{http://dx.doi.org/10.1007/s00023-017-0550-9}{DOI:\,10.1007/s00023-017-0550-9}}.

\bibitem{Petz_variational88}
D.~Petz.
\newblock A variational expression for the relative entropy.
\newblock {\em Communications in Mathematical Physics}, 114(2):345--349, 1988.
\newblock
  \texttt{\href{http://dx.doi.org/10.1007/BF01225040}{DOI:\,10.1007/BF01225040}}.

\bibitem{Schwonnek18}
R.~Schwonnek.
\newblock Additivity of entropic uncertainty relations.
\newblock {\em Quantum}, 2:59, 2018.
\newblock
  \texttt{\href{http://dx.doi.org/10.22331/q-2018-03-30-59}{DOI:\,10.22331/q-2018-03-30-59}}.

\bibitem{Shor2004}
P.~W. Shor.
\newblock Equivalence of additivity questions in quantum information theory.
\newblock {\em Communications in Mathematical Physics}, 246(3):473--473, 2004.
\newblock
  \texttt{\href{http://dx.doi.org/10.1007/s00220-003-0981-7}{DOI:\,10.1007/s00220-003-0981-7}}.

\bibitem{Sutter_book}
D.~Sutter.
\newblock {\em Approximate Quantum Markov Chains}.
\newblock Springer International Publishing, 2018.
\newblock
  \texttt{\href{http://dx.doi.org/10.1007/978-3-319-78732-9{\_}5}{DOI:\,10.1007/978-3-319-78732-9{\_}5}}.

\bibitem{Sutter2017}
D.~Sutter, M.~Berta, and M.~Tomamichel.
\newblock Multivariate trace inequalities.
\newblock {\em Communications in Mathematical Physics}, 352(1):37--58, 2017.
\newblock
  \texttt{\href{http://dx.doi.org/10.1007/s00220-016-2778-5}{DOI:\,10.1007/s00220-016-2778-5}}.

\bibitem{thompson65}
C.~J. Thompson.
\newblock Inequality with applications in statistical mechanics.
\newblock {\em Journal of Mathematical Physics}, 6(11):1812--1813, 1965.
\newblock
  \texttt{\href{http://dx.doi.org/10.1063/1.1704727}{DOI:\,10.1063/1.1704727}}.

\bibitem{uhlmann77}
A.~Uhlmann.
\newblock Relative entropy and the {W}igner-{Y}anase-{D}yson-{L}ieb concavity
  in an interpolation theory.
\newblock {\em Communications in Mathematical Physics}, 54(1):21--32, 1977.
\newblock
  \texttt{\href{http://dx.doi.org/10.1007/BF01609834}{DOI:\,10.1007/BF01609834}}.

\end{thebibliography}

\end{document}